\newcommand{\Abar}{\overline{A}}
\newcommand{\Vbar}{\overline{V}}
\renewcommand{\hbar}{\overline{h}}
\newcommand{\sbar}{\overline{s}}
\newcommand{\R}{\mathbb{R}}
\newcommand{\N}{\mathbb{N}}
\newcommand{\E}{\mathbb{E}}
\newtheorem{assump}{Assumption}% Assumption
\newtheorem{thm}{Theorem}% Theorem
\newtheorem{lem}{Lemma}% Lemma
\theoremstyle{definition}
\DeclareMathOperator*{\maximize}{maximize}
\DeclareMathOperator{\subto}{subject\ to}
\DeclareMathOperator*{\argmax}{argmax}
\newcommand{\ceil}[1]{\left\lceil {#1} \right\rceil}
\newif\iffigure
\active\gdef@{\mkern1mu}}
\mathchardef\Gamma="7100 \mathchardef\Delta="7101
\mathchardef\Theta="7102 \mathchardef\Lambda="7103
\mathchardef\Psi="7104 \mathchardef\Pi="7105 \mathchardef\Sigma="7106
\mathchardef\Upsilon="7107 \mathchardef\Phi="7108
\mathchardef\Psi="7109 \mathchardef\Omega="710A
\begin{document}
%	\runningtitle{}
	
	% If your paper is accepted and the title of your paper is very long,
	% the style will print as headings an error message. Use the following
	% command to supply a shorter title of your paper so that it can be
	% used as headings.
	%
	%\runningtitle{I use this title instead because the last one was very long}
	
	% If your paper is accepted and the number of authors is large, the
	% style will print as headings an error message. Use the following
	% command to supply a shorter version of the authors names so that
	% they can be used as headings (for example, use only the surnames)
	%
	%\runningauthor{Surname 1, Surname 2, Surname 3, ...., Surname n}
	
%	\twocolumn
	
	\title{
		Guarantees of Stochastic Greedy Algorithms for 
		Non-monotone Submodular Maximization 
		with Cardinality Constraint
	}
	
	\author{Shinsaku Sakaue
	\\
NTT Communication Science Laboratories}
	
	\maketitle
	
	\begin{abstract}
	Submodular maximization with a cardinality constraint can model various problems, and those problems are often very large in practice. For the case where objective functions are monotone, many fast approximation algorithms have been developed. The stochastic greedy algorithm (SG) is one such algorithm, which is widely used thanks to its simplicity, efficiency, and high empirical performance. However, its approximation guarantee has been proved only for monotone objective functions. When it comes to non-monotone objective functions, existing approximation algorithms are inefficient relative to the fast algorithms developed for the case of monotone objectives. In this paper, we prove that SG (with slight modification) can achieve almost $1/4$-approximation guarantees in expectation in linear time even if objective functions are non-monotone. Our result provides a constant-factor approximation algorithm with the fewest oracle queries for non-monotone submodular maximization with a cardinality constraint. Experiments validate the performance of (modified) SG. 
	\end{abstract}

	\newcommand{\f}{f}
\newcommand{\fdel}[2]{\f_{{#2}}({#1})}
\newcommand{\As}{A}
\newcommand{\Aso}{\As^*}
\newcommand{\Rs}{R}
\newcommand{\s}{s}
\newcommand{\n}{n}
\renewcommand{\k}{k}
\renewcommand{\a}{a}

\newcommand{\clip}[1]{{[{#1}]_+}}

\newcommand{\h}[2]{h_{#1}({#2})}
\newcommand{\hek}{\h{\epsilon}{\k}}

\newcommand{\bs}{\backslash}
\newcommand{\sg}{SG}
\newcommand{\si}{\s}
\newcommand{\sj}{\s}

\newcommand{\Thetarm}{\mathrm{\Theta}}
\newcommand{\Thetarmp}[1]{\mathrm{\Theta}\left({#1}\right)}
\newcommand{\Omegarm}{\mathrm{\Omega}}
\newcommand{\Omegarmp}[1]{\mathrm{\Omega}\left({#1}\right)}
\newcommand{\Op}[1]{O\left({#1}\right)}

\newcommand{\e}{\mathrm{e}}

\section{INTRODUCTION}
We consider the following submodular function maximization problem with a cardinality constraint: 
\begin{align}\label{problem:main}
\maximize_{S\subseteq V}  \quad \f(S) 
\qquad
\subto\quad |S|\le\k,
\end{align}
where 
$V$ is a finite ground set of $n$ elements, 
$\f:2^V\to\R$ 
is a non-negative submodular function, 
and $\k$ ($\le n$) is a positive integer. 
As is conventionally done, 
we assume the value oracle model 
(i.e., $\f(\cdot)$ is a black-box function)
and discuss 
the complexity of algorithms in terms of the number of oracle queries, which we call the oracle complexity. 
Since the evaluation of $f$ is often expensive, to develop oracle-efficient algorithms has been an important research subject. 
%	this setting is natural since computation of submodular function values is often the most costly part.  
%	The problem appears in many practical scenarios 
%	such as sensor placement~\citep{krause2008near} 
%	and document summarization~\citep{lin2010multi}.  

For the case where $\f$ is monotone, 
%i.e., $\f(S)\le\f(T)$ if $S\subseteq T$, 
the standard greedy algorithm 
can achieve a ($1-1/\e$)-approximation guarantee 
with $O(\k\n)$ queries \citep{nemhauser1978analysis}; 
this, however, is often too costly when applied to practical large-size instances.  
To deal with such large instances, 
various fast algorithms have been developed \citep{badanidiyuru2014fast,wei2014fast}. 
The stochastic greedy algorithm (\sg) \citep{mirzasoleiman2015lazier}  
%	a.k.a. lazier-than-lazy greedy,\footnote{
%	The lazy greedy~\citep{minoux1978accelerated,leskovec2007cost} is an accelerated greedy algorithm 
%	that avoids redundant oracle queries by utilizing the submodularity. 
%	This technique can be incorporated into \sg\ for further acceleration (see, \citep{mirzasoleiman2015lazier}). 
%	} 
is one such algorithm: 
In each iteration, 
instead of finding the element with the maximum marginal gain at the cost of up to $\n$ queries, 
we sample (roughly) ${\frac{\n}{\k}\ln\frac{1}{\epsilon}}$ elements uniformly at random, where $\epsilon\in(\e^{-\k},1)$, and choose the element with the largest marginal gain 
out of the sampled elements. 
\sg\ requires about $\n\ln\frac{1}{\epsilon}$ oracle queries in total, 
and it is known to achieve a ($1 -{1}/{\e} - \epsilon$)-approximation guarantee 
if $\f$ is monotone. 
Thanks to its simplicity, efficiency, strong guarantee, 
and 
high empirical performance, 
\sg\ has been used in various studies \citep{song2017deep,hasemi2018randomized}. %,veciana2019stochasticgreedy}. 

\begin{table*}[htb]
	\caption{Comparison of Fast Algorithms for Non-monotone Submodular Maximization with Cardinality Constraint. 
		%			The oracle complexity of \citep{buchbinder2017comparing} is linear in $n$ as with our results, but it requires at least $100n$ oracle queries if $\epsilon\le1/\e$ as detailed in \Cref{subsec:related}.
		%			As regards our result, 
		%			$\delta$ must satisfy $\delta\le\Thetarmp{\k/\n}$ 
		%			(see, \Cref{thm:sged}), and so the worst-case complexity is $\Omegarm(\n)$. 
		%		The expected complexity of our result is somewhat imprecise, although it converges to $\n\ln2$ as $\delta$ decreases (see, \Cref{subsec:sged} for details).
	}
	%\vskip 0.15in
	\centering
	\begin{tabular}{ccclc}
		\toprule
		& Approximation ratio & Oracle complexity & &  Remark \\
		\midrule
		\multirow{2}{*}{Our result} 
		&
		\multirow{2}{*}{$\frac{1}{4}(1-\delta)^2$}
		&
		$\n\ln2 + \n\delta\frac{\k}{\k-1}$ 
		&(expectation) 
		& \multirow{2}{*}{Randomized}\\
		&
		&
		$\max\{\n, \k + \frac{2\k}{\delta}\}\times \ln2 + \k$
		&(worst case)
		& \\
		\multirow{2}{*}{\citet{buchbinder2017comparing}}
		&
		\multirow{2}{*}{${1}/{\e} - \epsilon$} 
		&
		\multirow{2}{*}{$\Op{\frac{\n}{\epsilon^2}\ln\frac{1}{\epsilon}}$}
		&
		%\multirow{2}{*}
		%{(at least $100n$}
		&
		\multirow{2}{*}{Randomized}
		\\
		& &  & %if $\epsilon\le1/\e$) 
		\\
		%			 \multirow{2}{*}{\citet{buchbinder2014submodular}}			
		%			&
		%			\multirow{2}{*}{$\frac{1}{\e} + 0.004$}
		%			&
		%			\multirow{2}{*}{$\text{poly}(\n)$}
		%			&
		%			\multirow{2}{*}{best ratio, randomized}
		%			\\
		%			& & & \\
		\multirow{1}{*}{\citet{kuhnle2019interlaced}}			
		& 
		\multirow{1}{*}{${1}/{4}-\epsilon$}
		&
		\multirow{1}{*}{$\Op{\frac{\n}{\epsilon}\ln\frac{\n}{\epsilon}}$}
		&
		&
		\multirow{1}{*}{Deterministic}
		\\
		%			& & & \\
		%			\multirow{2}{*}{\citet{buchbinder2018deterministic}}
		%			&
		%			\multirow{2}{*}{$\frac{1}{\e}$} 
		%			&
		%			\multirow{2}{*}{$\Op{\k^2\n}$}
		%			& 
		%			\multirow{2}{*}{best ratio, deterministic}
		%			\\
		%			& & & \\
		\bottomrule
		\label{table:}
	\end{tabular}
\end{table*}

Non-monotone submodular functions also appear in many practical scenarios:  
sensor placement \citep{krause2008near}, 
document summarization \citep{lin2010multi}, 
feature selection \citep{iyer2012algorithms}, 
and 
recommendation \citep{mirzasoleiman2016fast}.  
Unfortunately, the problem becomes much harder 
if $\f$ is non-monotone; 
for example, 
the approximation ratio of the greedy algorithm can become 
arbitrarily poor (at most $1/\k$-approximation) 
in general as in \citep[Appendix H.1]{pan2014parallel}. 
Although various constant-factor approximation algorithms 
for non-monotone objectives have been developed \citep{buchbinder2014submodular,buchbinder2017comparing,kuhnle2019interlaced}, 
they often require much more oracle queries than the aforementioned 
fast algorithms developed for monotone objectives, including \sg. 
Therefore,  
non-monotone submodular maximization with a cardinality constraint 
is currently awaiting oracle-efficient constant-factor approximation algorithms.

\subsection{Our Contribution}
We prove approximation guarantees of (modified) \sg\ for 
non-monotone objective functions, 
thus providing oracle-efficient approximation algorithms for non-monotone submodular maximization with a cardinality constraint. 
Below we detail our contributions: 
\begin{itemize}
	\item 
	Assuming $\n\ge3\k$, 
	we prove that \sg\ can achieve a 
	$\frac{1}{4} \left( 1 - 
	2\cdot\frac{\k-1}{\n-\k} \right)^2$-approximation 
	guarantee in expectation by setting $\epsilon$ at $\frac{1}{2}+ \frac{\k-1}{\n-\k}$. 
	Namely, 
	if $\n\gg\k$, 
	\sg\ can achieve an approximation ratio close to $1/4$ with about $\n\ln2$ queries. 
	
	\item We develop modified \sg\  
	such that the sample size in each iteration is also stochastic. 
	The resulting algorithm achieves a $\frac{1}{4}(1-\delta)^2$-approximation guarantee in expectation. 
	The expected and worst-case oracle complexities are bounded by 
	$\n\ln2 + \n\delta\frac{\k}{\k-1}=O(n)$ and 
	$\max\{\n, \k + \frac{2\k}{\delta}\}\times \ln2 + \k\le O(\n/\delta)$, 
	respectively. 
	Namely, modified SG is  a randomized linear-time constant-factor approximation algorithm. 
	As will be discussed in \Cref{subsec:related}, 
	this result provides a constant-factor approximation algorithm with the fewest oracle queries. 
	
	\item Experiments confirm the efficiency and high performance of (modified) \sg; 
	they run much faster and require far fewer queries 
	than existing algorithms while achieving comparable objective values. 
	The results demonstrate that 
	we can use (modified) \sg\ as practical and theoretically guaranteed algorithms even for non-monotone objectives. 
\end{itemize}

Note, however, that the approximation guarantees are required to hold only in expectation; 
the worst-case approximation ratio can be arbitrarily bad. 
This is why the constant-factor approximation guarantee with 
oracle queries 
possibly fewer than $n$, 
which may be counter-intuitive at first glance, is possible.

\subsection{Related Work}\label{subsec:related} 
\sg\ was proposed by~\citet{mirzasoleiman2015lazier} as an accelerated 
version of the well-known greedy algorithm \citep{nemhauser1978analysis} 
for monotone submodular maximization with a cardinality constraint. 
\citet{hassidim2017robust} studied a variant of \sg\ 
for monotone objectives and 
proved a guarantee that holds with a high probability. 
Guarantees of \sg\ for monotone set functions with approximate submodularity 
have also been widely studied \citep{khanna2017approximation,hasemi2018randomized,veciana2019stochasticgreedy}. 
\citet{harshaw2019submodular} studied \sg\ 
for maximizing set functions written as $f = g - c$, 
where $g$ is monotone weakly submodular and $c$ 
is non-negative modular; 
while $f$ can be non-monotone, 
they do not consider the whole class of non-monotone submodular functions 
and their approximation guarantee cannot be written with a 
multiplicative factor unlike our results. 

Constrained non-monotone submodular maximization has been extensively studied
\citep{lee2010maximizing,
	gupta2010constrained,
	%gharan2011submodular,
	feldman2011unified}. 
%vondrak2013symmetry. 
For the cardinality-constrained case, 
\citet{buchbinder2014submodular} proposed the random greedy algorithm, 
which behaves differently than SG. 
Specifically, 
it chooses an element uniformly at random from 
the top-$\k$ most beneficial elements in each iteration. 
While it achieves a $1/\e$-approximation guarantee, 
its oracle complexity is $\Op{\k\n}$, 
which is as costly as the standard greedy algorithm.  
They also achieved the best approximation ratio, 
${1}/{\e} + 0.004$, by combining the random greedy 
and continuous double greedy algorithms.  
%which is typically expensive due to the use of the multilinear extension.  
\citet{buchbinder2018deterministic} derandomized the random greedy algorithm 
and achieved a ${1}/{\e}$-approximation guarantee with $O(\k^2\n)$ oracle queries; 
${1}/{\e}$ is the best ratio achieved by deterministic algorithms.  
As regards hardness results, 
\citet{vondrak2013symmetry} proved that 
to improve a ${1}/{2}$-approximation guarantee 
requires exponentially many queries when $\k={\n}/{2}$. 
For the case of $\k=o(\n)$, 
\citet{gharan2011submodular} 
proved 
a stronger hardness of $0.491$-approximation.

Regarding oracle-efficient algorithms, 
\citet{buchbinder2017comparing} proposed the random sampling algorithm (RS), %(Algorithm~4 in their paper), 
which achieves a (${1}/{\e}-\epsilon$)-approximation 
with $\Op{\frac{n}{\epsilon^2}\ln\frac{1}{\epsilon}}$ 
oracle queries; 
to the best of our knowledge, 
this is the only existing linear-time constant-factor approximation algorithm.     
More precisely,
RS requires at least $\frac{8n}{\epsilon^2}\ln\frac{2}{\epsilon}$ queries; 
hence, to obtain a non-negative approximation ratio, 
we need at least $8\e^2n\ln(2\e)\ge100n$ queries.  
On the other hand, 
the expected and worst-case oracle complexities of 
the modified \sg\ are at most 
$\n\ln2 + \n\delta\frac{\k}{\k-1}$ and 
$\max\{\n, \k + \frac{2\k}{\delta}\}\times \ln2 + \k$, respectively.  
Therefore, 
taking the constant factors into account,  
\sg\ is far faster than RS. 
In \Cref{sec:experiments}, 
we experimentally confirm that this gap is crucial in practice. 	 
\citet{buchbinder2017comparing} also developed another algorithm that achieves a (${1}/{\e}-\epsilon$)-approximation guarantee 
with $\Op{\k\sqrt{\frac{\n}{\epsilon}\ln\frac{\k}{\epsilon}} + \frac{\n}{\epsilon}\ln\frac{\k}{\epsilon}}$ 
oracle queries in expectation. 
Since $\k=\Thetarm(\n)$ in general, it is more costly than \sg.
Recently, \citet{kuhnle2019interlaced} proposed a deterministic  (${1}/{4}-\epsilon$)-approximation algorithm with 
$O(\frac{\n}{\epsilon}\ln\frac{\n}{\epsilon})$ queries, 
which is the best oracle complexity among those of deterministic algorithms.  
Note that 
it is also slower than \sg\ due to the presence of the $\ln{\n}$ factor.   
\Cref{table:} 
compares the above results and ours. 	 

We remark that our work is different from 
\citep{qian2018approximation,ji2020stochastic}, 
which are seemingly overlapping with ours.  
Their algorithms for non-monotone objectives are 
not \sg-style ones but variants of the aforementioned random greedy algorithm. 
Hence, 
unlike \sg\ and the above efficient algorithms, 
their algorithms generally require $O(kn)$ queries.  
%	; i.e., those are as costly as the standard greedy. 
Approximation algorithms for non-monotone submodular maximization with more general constraints 
have also been studied \citep{mirzasoleiman2016fast,feldman2017greed}. 
If those algorithms are applied to the cardinality-constrained case, 
we need $\Omegarm(\k\n)$ queries in general.

Recently, 
parallel non-monotone submodular maximization algorithms have been widely studied 
\citep{balkanski2018nonmonotone,ene2019submodular,fahrbach2019non-monotone}. 
Unlike us, 
they are interested in a different complexity framework 
called the adaptive complexity, 
which is defined with the number of 
sequential rounds required 
when polynomially many oracle queries can be executed in parallel.
As summarized in \citep{fahrbach2019non-monotone}, 
such parallel algorithms require more than 
$\Omegarm(n)$ oracle queries; 
among them, 
a $(0.039-\epsilon)$-approximation algorithm of \citep{fahrbach2019non-monotone} requires the fewest 
queries, $O(\frac{n}{\epsilon^2}\ln k)$, in expectation. 
Unlike those algorithms,  
\sg\ requires only $O(n)$ queries in expectation.

\subsection{Notation and Definitions}
Given a set function $\f:2^V\to\R$, 
we define $\fdel{Y}{X}\coloneqq
\f(X\cup Y) - \f(X)$ for any $X,Y\subseteq V$. 
We sometimes abuse the notation and regard $v\in V$ as a subset 
(e.g., 
we use $\fdel{v}{X}$ instead of $\fdel{\{v\}}{X}$). 
We say $\f$ is 
non-negative if $\f(X)\ge0$ for any $X\subseteq V$, 
monotone if $\fdel{v}{X}\ge0$ for any $X\subseteq V$ and $v\notin X$, 
normalized if $\f(\emptyset)=0$, 
and 
submodular if 
$\f(X) + \f(Y) \ge \f(X\cup Y) + \f(X\cap Y)$ for any $X,Y\subseteq V$, 
which is also equivalently characterized by the following 
diminishing return property: 
$\fdel{v}{X}\ge\fdel{v}{Y}$ for any $X\subseteq Y$ 
and $v\notin Y$. 
In this paper, 
all set functions are assumed to be non-negative and submodular 
(not necessarily monotone and normalized) 
unless otherwise specified.  
In what follows, 
we use $\Aso$ to denote an optimal solution to 
problem \eqref{problem:main}.

\subsection{Organization}
\Cref{sec:sg} reviews the details of \sg\ and the proof for the case of monotone objectives. 
In \Cref{sec:nonmonotone} we prove the approximation guarantees of 
(modified) \sg\ for the case of non-monotone objectives. 
\Cref{sec:experiments} presents the experimental results. 
\Cref{sec:conclusion} concludes this paper. 
All missing proofs are presented in the appendix.

\section{STOCHASTIC GREEDY AND PROOF FOR MONOTONE CASE}\label{sec:sg}
We here review the details of \sg\ and the proof for the case of monotone objectives~\citep{mirzasoleiman2015lazier}, 
which will help us to understand the main discussion presented in \Cref{sec:nonmonotone}. 
\begin{algorithm}[tb]
	\caption{Stochastic Greedy (\sg)}  \label{algorithm}
	\begin{algorithmic}[1]
		\State $\As_0\gets\emptyset$
		\For{$i=1,\dots,k$}
		\State Get $\Rs$ by sampling $\ceil{\si}$ elements from $V\bs\As_{i-1}$
		\State $\a_i\gets\argmax_{\a\in\Rs}\fdel{\a}{\As_{i-1}}$
		\If{$\fdel{\a_i}{\As_{i-1}}>0$} $\As_i\gets\As_{i-1}\cup\{\a_i\}$ \label{step:if}
		\Else{} $\As_i\gets\As_{i-1}$ \label{step:asis}
		\EndIf			
		\EndFor
		\State \Return $\As_\k$
	\end{algorithmic}
\end{algorithm} 

Let $\si\coloneqq\frac{\n}{\k}\ln\frac{1}{\epsilon}$. 
In each iteration of \sg\ (\Cref{algorithm}), 
we choose the best element from $\ceil{\s}$ elements sampled uniformly at random from $V\bs\As_{i-1}$ without replacement. 
We remark that 
\Cref{algorithm} is slightly different from the original \sg\ \citep{mirzasoleiman2015lazier}: 
since the marginal gain can be negative due to the lack of monotonicity, 
we let \Cref{algorithm} to reject elements with non-positive marginal gains 
as in Steps~\ref{step:if} and \ref{step:asis} 
(elements with zero gains are rejected to simplify the discussion in \Cref{subsec:sged}). 
As is usual with the proofs of greedy-style algorithms, 
we first consider lower bounding the marginal gain of each iteration as follows: 
\begin{restatable}[cf. \citep{mirzasoleiman2015lazier}]{lem}{marginal}\label{lem:marginal}
	%\begin{lem}[cf. \citep{mirzasoleiman2015lazier}]\label{lem:marginal} 
	If $f$ is non-negative and submodular, 
	for $i=1,\dots,\k$, 
	we have
	\begin{align+}
		\E[\f(\As_{i}) - \f(\As_{i-1})]
		\ge{} 
		\frac{1-\epsilon}{\k}
		\E[\fdel{\Aso}{\As_{i-1}}].\nonumber
	\end{align+}
\end{restatable}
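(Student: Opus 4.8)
The plan is to condition on the current set $\As_{i-1}=A$ and prove the bound pointwise, i.e.\ $\E[\f(\As_i)-\f(\As_{i-1})\mid\As_{i-1}=A]\ge\frac{1-\epsilon}{\k}\fdel{\Aso}{A}$ for every realization $A$; taking the outer expectation over $\As_{i-1}$ then yields the stated inequality. The first move is to rewrite the per-step gain. Because Steps~\ref{step:if}--\ref{step:asis} reject non-positive marginals, the gain equals $\max\{0,\max_{\a\in\Rs}\fdel{\a}{A}\}=\max_{\a\in\Rs}\clip{\fdel{\a}{A}}$, using that $[\cdot]_+$ commutes with $\max$. Writing $g(v):=\clip{\fdel{v}{A}}\ge0$, this reduces the non-monotone problem to bounding $\E[\max_{\a\in\Rs}g(\a)]$ for a non-negative weight function, which is what lets me reuse the monotone analysis of \citet{mirzasoleiman2015lazier}.

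Next I would lower-bound $\E[\max_{\a\in\Rs}g(\a)]$ in terms of the optimal elements still missing. Set $B:=\Aso\setminus A$, $b:=|B|$, and $m:=\n-|A|$. Since $g\ge0$, the maximum over $\Rs$ dominates the maximum over $\Rs\cap B$, so $\E[\max_{\a\in\Rs}g(\a)]\ge\Pr[\Rs\cap B\ne\emptyset]\cdot\E[\max_{v\in\Rs\cap B}g(v)\mid\Rs\cap B\ne\emptyset]$. Conditioning further on $|\Rs\cap B|=j\ge1$, the set $\Rs\cap B$ is a uniformly random $j$-subset of $B$, so its maximum is at least its average, whose expectation equals $\frac1b\sum_{v\in B}g(v)$; averaging over $j$ gives $\E[\max_{\a\in\Rs}g(\a)]\ge\Pr[\Rs\cap B\ne\emptyset]\cdot\frac1b\sum_{v\in B}g(v)$. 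The point of working with the maximum (rather than a single sampled element) is that it recovers the full hitting-probability factor instead of a weaker per-element factor $\lceil\s\rceil/m$.

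It then remains to control the hitting probability and to pass from $g$ back to $\f$. For sampling $\lceil\s\rceil$ elements without replacement I would bound $\Pr[\Rs\cap B=\emptyset]\le(1-b/m)^{\lceil\s\rceil}\le\e^{-b\lceil\s\rceil/m}\le\epsilon^{b/\k}$, using $\lceil\s\rceil\ge\s=\frac\n\k\ln\frac1\epsilon$ and $m\le\n$. The crucial inequality, and the step I expect to be the main obstacle, is converting $1-\epsilon^{b/\k}$ into the clean factor $\frac b\k(1-\epsilon)$: setting $t=b/\k\in(0,1]$, the function $t\mapsto 1-\epsilon^t-t(1-\epsilon)$ vanishes at $t=0$ and $t=1$ and has second derivative $-\epsilon^t(\ln\epsilon)^2\le0$, hence is concave and therefore non-negative on $[0,1]$, giving $\Pr[\Rs\cap B\ne\emptyset]\ge\frac b\k(1-\epsilon)$. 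Finally, $g(v)\ge\fdel{v}{A}$ together with the submodular subadditivity of marginals gives $\sum_{v\in B}g(v)\ge\sum_{v\in B}\fdel{v}{A}\ge\fdel{\Aso}{A}$. Chaining these bounds yields $\E[\max_{\a\in\Rs}g(\a)]\ge\frac{1-\epsilon}\k\sum_{v\in B}g(v)\ge\frac{1-\epsilon}\k\fdel{\Aso}{A}$, which is precisely the conditional bound; taking expectations over $\As_{i-1}$ completes the argument. The degenerate case $B=\emptyset$ is immediate, since then $\fdel{\Aso}{A}\le0$ while the gain is non-negative.
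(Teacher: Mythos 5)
Your proposal is correct and follows essentially the same route as the paper's proof: condition on $\As_{i-1}$, use the clipping $\clip{\cdot}$ (justified by the rejection steps) to handle non-monotonicity, bound the probability of hitting $\Aso\setminus\As_{i-1}$ and linearize it via concavity to get the $(1-\epsilon)\frac{|\Aso\setminus\As_{i-1}|}{\k}$ factor, replace the max over the hit set by its average, and finish with submodular subadditivity. Your write-up is merely a bit more explicit in places (the conditioning on $|\Rs\cap B|=j$ and the degenerate case $B=\emptyset$), but there is no substantive difference.
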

%\end{lem}
While \citet{mirzasoleiman2015lazier} proved this lemma 
implicitly relying on the monotonicity of $\f$, 
we can prove it even if $f$ is non-monotone 
due to the non-negativity of $\E[\f(\As_{i}) - \f(\As_{i-1})]$, 
which is obtained from Steps~\ref{step:if} and \ref{step:asis} 
(see, \Cref{a_sec:lemma1} for the proof). 
This non-monotone version of the lemma will play an important role 
in the proof of our main result presented in \Cref{sec:nonmonotone}.

We now see how to prove the ($1-1/\e-\epsilon$)-approximation guarantee 
of \sg\ for the case of monotone objectives. 
Assume that $\f$ is monotone and normalized. 
We have $\E[\f(\Aso\cup\As_{i-1})]\ge\f(\Aso)$ due to the monotonicity, 
and thus \Cref{lem:marginal} implies   
\begin{align}
\E[\f(\As_{i}) - \f(\As_{i-1})]
\ge{} 
\frac{1-\epsilon}{\k}
(\f(\Aso) - \E[\f(\As_{i-1})]).  
\end{align}
By using this inequality for $i=1,\dots,\k$ 
and $\f(\emptyset)=0$, 
we obtain the desired result as follows: 
\begin{align}
\E[\f(\As_{\k})] 
\ge 
\f(\Aso)- \left( 1 - \frac{1-\epsilon}{\k} \right)^\k(\f(\Aso) - \f(\emptyset)) 
\ge 
\left(1-\frac{1}{\e^{1-\epsilon}}\right)\f(\Aso)
%	\\
%	&
\ge 
\left(1-\frac{1}{\e}- \epsilon \right)\f(\Aso).
\end{align}
In the above proof, 
the inequality, $\E[\f(\Aso\cup\As_{i-1})] \ge \f(\Aso)$, 
obtained with the monotonicity, played a key role. 
As will be shown in \Cref{subsec:sg}, 
we can derive a variant of the inequality 
for non-monotone $\f$ by using the randomness of \sg, 
which enables us to prove approximation guarantees 
without the monotonicity. 

\renewcommand{\c}{c}

\section{PROOF FOR NON-MONOTONE CASE}\label{sec:nonmonotone}

We present approximation guarantees of (modified) \sg\ for non-monotone objectives. 
In \Cref{subsec:sg}, 
we prove the $\frac{1}{4} \left( 1 - 
2\cdot\frac{\k-1}{\n-\k} \right)^2$-approximation guarantee of \sg, 
and in \Cref{subsec:sged} we prove the $\frac{1}{4}(1-\delta)^2$-approximation guarantee  of modified \sg.  

\subsection{$\frac{1}{4} \left( 1 - 2\cdot\frac{\k-1}{\n-\k} \right)^2$-approximation of \sg}\label{subsec:sg}

We here make the following assumption: 
\begin{assump}\label{assump:hek}
	We assume that $\k\ge2$ and
	$\n \ge 3\k$ hold 
	and that 
	$\epsilon$ is set so as to satisfy $1/\e\le\epsilon<1$.  
\end{assump}

The first assumption, $\k\ge2$, is natural 
since, 
if $\k=1$, 
an $\alpha$-approximation guarantee 
($\forall\alpha\in[0,1]$) can be achieved in expectation by examining 
$\ceil{\alpha\n}$ elements, 
which means any approximation ratio can be achieved 
in linear time.    
Hence we assume $\k\ge2$ in what follows. 
The second assumption, $\n\ge3\k$, 
will be removed in \Cref{subsec:sged}. 
The third assumption, $1/\e\le\epsilon<1$, 
can be easily satisfied since $\epsilon$ is a controllable 
input.

%	\citep[Lemma 2.3]{feige2011maximizing}
%	\begin{lem}
%		Let $\f:2^V\to\R$ be submodular, 
%		$\As, B\subseteq V$ be two (not necessarily disjoint) sets and
%		$\As(p)$, $B(q)$ be their independently sampled subsets, 
%		where each element of $\As$ appears in $\As(p)$ 
%		with probability $p$ and each element of $B$ appears in $B(q)$ 
%		with probability $q$. Then
%	\begin{align}
%	\E[f(A(p) \cup B(q))] 
%	\ge{}& 
%	(1-p)(1-q) f(\emptyset) +p(1-q) f(A)
%	\\
%	&+(1-p) q f(B)+p q f(A \cup B)
%	\end{align}
%	\end{lem}

We derive
a variant of $\E[\f(\Aso\cup\As_{i-1})]\ge\f(\Aso)$ for non-monotone $\f$. 
To this end, we use the following lemma: 
\begin{lem}[\citet{buchbinder2014submodular}, Lemma 2.2]\label{lem:buchbinder}
	Let $g:2^V\to\R$ be submodular. 
	Denote by $\As(p)$ 
	a random subset of $\As\subseteq V$ 
	where each element appears with 
	a probability of at most $p$ (not necessarily independently). 
	Then, $\E[g(\As(p))] \ge (1-p) g(\emptyset)$.	
\end{lem}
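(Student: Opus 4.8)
The plan is to split the claim into an ``independent'' core, which has a short self-contained proof, and the allowance for arbitrary couplings, which is the real difficulty; throughout I use that $g$ is non-negative (the standing assumption of the paper), since the bound is otherwise vacuous. I would first prove the special case in which $\As(p)$ contains each element of $\As$ independently with probability exactly $p$. Fix an ordering $a_1,\dots,a_m$ of $\As$, put $P_i=\{a_1,\dots,a_i\}$ with $P_0=\emptyset$, and let $R_i=\As(p)\cap P_i$, so that $R_0=\emptyset$ and $R_m=\As(p)$. Writing $X_i=\mathbb{1}[a_i\in\As(p)]$ and telescoping, the independence of $X_i$ from $R_{i-1}$ gives
\begin{align*}
\E[g(\As(p))]-g(\emptyset)
&=\sum_{i=1}^{m}\E\big[X_i\,\big(g(R_{i-1}\cup\{a_i\})-g(R_{i-1})\big)\big]\\
&=p\sum_{i=1}^{m}\E\big[g(R_{i-1}\cup\{a_i\})-g(R_{i-1})\big].
\end{align*}
By the diminishing-return form of submodularity the marginal gain of $a_i$ on $R_{i-1}\subseteq P_{i-1}$ is at least its gain on $P_{i-1}$, so each summand is at least $g(P_i)-g(P_{i-1})$ and the right-hand side telescopes to $p\big(g(\As)-g(\emptyset)\big)$. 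Hence $\E[g(\As(p))]\ge(1-p)g(\emptyset)+p\,g(\As)\ge(1-p)g(\emptyset)$, the last step being non-negativity.

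The case of independent but non-uniform marginals $q_a\le p$ then follows because $\E[g(\As(p))]$ is the multilinear extension evaluated at $(q_a)_a$; being affine in each coordinate, it is minimized over the box $\prod_a[0,p]$ at a point with every $q_a\in\{0,p\}$, which is exactly a uniform-$p$ instance on a sub-ground-set, already handled above.

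The main obstacle is dropping independence entirely. Here the tempting induction -- condition on $\mathbb{1}[a\in\As(p)]$ and recurse on $\As\bs\{a\}$ -- breaks down, because conditioning on one element's membership can raise the conditional marginals of the remaining elements above $p$ and so destroys the inductive hypothesis; moreover the worst case is genuinely correlated, so one cannot reduce to the product case above (products are not the minimizers). Instead I would use that $\mu\mapsto\E_\mu[g(\As(p))]$ is linear in the law $\mu$ and that the admissible laws form the polytope $\{\mu\ge0:\ \sum_S\mu_S=1,\ \sum_{S\ni a}\mu_S\le p\ \ \forall a\in\As\}$, so its minimum is attained at a vertex; it then suffices to certify the inequality at these (highly correlated, small-support) extreme points using submodularity together with non-negativity. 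For instance, the fully correlated vertex placing mass $p$ on $\As$ and $1-p$ on $\emptyset$ already attains $(1-p)g(\emptyset)+p\,g(\As)$, matching the independent bound. Carrying out this vertex analysis in general is precisely the content of \citet{buchbinder2014submodular}, whose Lemma~2.2 I would invoke directly; the sketch above is what convinces me the constant $(1-p)$ is correct and isolates where the difficulty lies.
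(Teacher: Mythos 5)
The paper offers no proof of this lemma at all: it is imported verbatim as Lemma~2.2 of \citet{buchbinder2014submodular}, so your final move --- invoking that reference directly --- coincides exactly with what the paper does. Your self-contained material is correct as far as it goes: the telescoping argument for independent uniform-$p$ sampling is sound (independence of $X_i$ from $R_{i-1}$, the diminishing-returns step $g(R_{i-1}\cup\{a_i\})-g(R_{i-1})\ge g(P_i)-g(P_{i-1})$, and the final appeal to non-negativity are all valid), and the reduction of independent non-uniform marginals to the uniform case via coordinate-wise affineness of the multilinear extension is also fine. You are likewise right that non-negativity of $g$ is indispensable (the claim fails for the submodular function $g(S)=-|S|$) and that the correlated case cannot simply be reduced to the product case.

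The one thing to flag is that the portion you actually prove is not the statement of the lemma: the lemma explicitly allows arbitrary dependence, and that is the regime in which the paper uses it ($\As_{i-1}$ is the greedy prefix, whose membership events are heavily correlated). Your proposed route for that case --- minimize the linear functional $\mu\mapsto\E_\mu[g(\As(p))]$ over the marginal polytope and certify the inequality at each vertex --- is a strategy rather than a proof; you verify only one vertex, the vertex structure of that polytope is not simple, and this is not in fact how \citet{buchbinder2014submodular} argue. The standard short proof goes through the Lov\'asz extension $\hat g(x)=\E_{\theta\sim U[0,1]}\left[g(\{a : x_a\ge\theta\})\right]$: convexity of $\hat g$ (equivalent to submodularity of $g$) and Jensen's inequality give $\E[g(\As(p))]\ge\hat g(x)$ for the marginal vector $x$, and since $x_a\le p$ for every $a$ the threshold set is empty whenever $\theta>p$, so $\hat g(x)\ge(1-p)g(\emptyset)$ by non-negativity. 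This handles all couplings at once and would let you dispense with both the independence assumption and the vertex analysis.
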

Namely, 
if $\As_{i-1}$ includes each $a\in V$ with a probability of at most $p$, 
then $\E[\f(\Aso\cup\As_{i-1})] \ge (1-p)\f(\Aso)$ holds. 
Below we upper bound $p$ by leveraging the randomness of \sg\ 
and prove the following lemma: 

\begin{lem}\label{lem:opt}
	Assume that $1/\e\le\epsilon<1$ holds. 
	Then,   
	for $i=0,\dots,\k$, we have 
	\begin{align}\label{eq:variant}
	\E[\f(\Aso\cup\As_{i})] 
	\ge
	\left( 1-\frac{1}{\k}\ln\frac{1}{\epsilon} - \frac{2}{\n-\k}\right)^{i} 
	\f(\Aso). 
	\end{align}
\end{lem}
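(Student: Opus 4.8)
The plan is to reduce the claim to a per-element survival probability and then apply \Cref{lem:buchbinder}. Consider the set function $g(X)\coloneqq\f(\Aso\cup X)$, which is submodular because $\f$ is, and note $g(\emptyset)=\f(\Aso)$. Viewing $\As_i$ as a random subset of $V$, \Cref{lem:buchbinder} says that if every $\a\in V$ satisfies $\P[\a\in\As_i]\le p$, then $\E[\f(\Aso\cup\As_i)]=\E[g(\As_i)]\ge(1-p)\f(\Aso)$. Abbreviating $\beta\coloneqq\tfrac{1}{\k}\ln\tfrac{1}{\epsilon}+\tfrac{2}{\n-\k}$, it thus suffices to prove the survival bound $\P[\a\notin\As_i]\ge(1-\beta)^i$ for every $\a$. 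Indeed, setting $p\coloneqq1-(1-\beta)^i$, this gives $\P[\a\in\As_i]\le p$ and hence $\E[\f(\Aso\cup\As_i)]\ge(1-p)\f(\Aso)=(1-\beta)^i\f(\Aso)$.

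I would prove the survival bound by induction on $i$; the base case $i=0$ is immediate since $\As_0=\emptyset$. For the step, I condition on $\As_{i-1}$. A fixed $\a\notin\As_{i-1}$ can be added at iteration $i$ only if it is drawn into the sample $\Rs$, which consists of $\ceil{\s}$ elements taken uniformly without replacement from $V\bs\As_{i-1}$; hence the conditional probability of $\a$ being added is at most $\ceil{\s}/\lvert V\bs\As_{i-1}\rvert$. Since at most one element is inserted per iteration, $\lvert\As_{i-1}\rvert\le i-1\le\k-1$, so this probability is uniformly bounded by $\ceil{\s}/(\n-\k+1)$. Taking expectations over $\As_{i-1}$ gives $\P[\a\notin\As_i]\ge\bigl(1-\ceil{\s}/(\n-\k+1)\bigr)\P[\a\notin\As_{i-1}]$, and the induction hypothesis closes the step as soon as $\ceil{\s}/(\n-\k+1)\le\beta$.

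The remaining task is the elementary inequality $\ceil{\s}/(\n-\k+1)\le\beta$. Using $\ceil{\s}\le\s+1=\tfrac{\n}{\k}\ln\tfrac{1}{\epsilon}+1$ and writing $\n=(\n-\k+1)+(\k-1)$ in the numerator, I get $\ceil{\s}/(\n-\k+1)\le\tfrac{1}{\k}\ln\tfrac{1}{\epsilon}+\tfrac{(\k-1)\ln(1/\epsilon)}{\k(\n-\k+1)}+\tfrac{1}{\n-\k+1}$. The third assumption of \Cref{assump:hek}, $\epsilon\ge1/\e$, forces $\ln\tfrac{1}{\epsilon}\le1$, and $\tfrac{\k-1}{\k}<1$, so the last two terms are each at most $\tfrac{1}{\n-\k+1}\le\tfrac{1}{\n-\k}$ and sum to at most $\tfrac{2}{\n-\k}$, establishing $\ceil{\s}/(\n-\k+1)\le\beta$. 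The same estimates together with $\n\ge3\k$ also give $\ceil{\s}\le\n-\k+1\le\lvert V\bs\As_{i-1}\rvert$, so sampling $\ceil{\s}$ distinct elements is feasible and the probability $\ceil{\s}/\lvert V\bs\As_{i-1}\rvert$ is exact.

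The main obstacle is securing the multiplicative form $(1-\beta)^i$ rather than the weaker additive bound $1-i\beta$ that a union bound over the first $i$ iterations would produce; because $(1-\beta)^i\ge1-i\beta$, only the step-by-step multiplicative survival argument is strong enough to imply the stated guarantee. A secondary subtlety is that sampling occurs within $V\bs\As_{i-1}$, so the relevant denominator is $\n-\k+1$ rather than $\n$; the resulting slack is exactly what the extra $\tfrac{2}{\n-\k}$ term in $\beta$ absorbs, and this is precisely where $\ln\tfrac{1}{\epsilon}\le1$ (equivalently $\epsilon\ge1/\e$) is needed.
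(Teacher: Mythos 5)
Your proof is correct and follows essentially the same route as the paper's: both bound the per-element inclusion probability $\P[\a\in\As_i]$ via per-iteration survival probabilities (you phrase it as an induction, the paper as a telescoping product) and then apply \Cref{lem:buchbinder} to $g(X)=\f(\Aso\cup X)$. Your denominator bookkeeping ($\n-\k+1$ versus the paper's $\n-\k$) differs only cosmetically and both absorb the slack into the $\tfrac{2}{\n-\k}$ term using $\ln\tfrac{1}{\epsilon}\le1$.
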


\begin{proof}[Proof of \Cref{lem:opt}]
	If $i=0$, the lemma holds since $\As_0=\emptyset$. 
	Below we assume $i\ge1$. 
	In the $i$-th iteration, 
	conditioned on $\As_{i-1}$, 
	each $a\in V\bs\As_{i-1}$ stays outside of $\As_{i}$ 
	with a probability of at least $1-\frac{\ceil{\si}}{|V\bs\As_{i-1}|}$. 
	Hence, 
	after $i$ iterations ($i=1,\dots,\k$), 
	each $\a\in V$ stays outside of $\As_{i}$ with a probability of at least 
	\begin{align}
	\prod_{j=1}^i
	\left( 1 - \frac{\ceil{\sj}}{|V\bs\As_{j-1}|} \right)
	\ge{}
	\prod_{j=1}^i
	\left( 1 - \frac{\sj+1}{\n-\k} \right)
	={}
	\left( 1-\frac{1}{\k}\ln\frac{1}{\epsilon} - \frac{1+\ln\frac{1}{\epsilon}}{\n-\k}  \right)^{i}.
	\end{align}
	Therefore, from $\epsilon\ge1/\e$, we obtain  
	\[
	\Pr[\a\in\As_{i}] \le 
	1 - \left( 1-\frac{1}{\k}\ln\frac{1}{\epsilon} - \frac{2}{\n-\k} \right)^{i}.
	\]
	We define $g(\As)\coloneqq\f(\As\cup\Aso)$, 
	which we can easily confirm to be submodular. 
	From \Cref{lem:buchbinder}, 
	we obtain 
	\begin{align}
	\E[\f(\Aso\cup\As_{i})] 
	={}
	\E[g(\As_{i})]
	\ge{}
	\left( 1-\frac{1}{\k}\ln\frac{1}{\epsilon} - \frac{2}{\n-\k} \right)^{i}
	\E[g(\emptyset)] 
	={}
	\left( 1-\frac{1}{\k}\ln\frac{1}{\epsilon} - \frac{2}{\n-\k} \right)^{i}
	\f(\Aso). 
	\end{align}
	Hence the lemma holds. 
\end{proof} 

We then consider lower bounding the RHS of 
\eqref{eq:variant} 
%the inequality in \Cref{lem:opt} 
for $i=\k-1$. 
Intuitively, if $\n\gg\k$ and the $\frac{2}{\n-\k}$ term is ignorably small, 
the RHS can be lower bounded by $\epsilon\f(\Aso)$
since 
$\left( 1- \frac{1}{\k}\ln\frac{1}{\epsilon} \right)^{\k-1}\approx \e^{-\ln\frac{1}{\epsilon}}=\epsilon$.  
By evaluating the RHS more carefully using \Cref{assump:hek}, 
we can obtain the following lemma (proof is provided in \Cref{a_sec:lemma4}): 	

\begin{restatable}{lem}{eps}\label{lem:eps}
	%\begin{lem}\label{lem:eps}
	If \Cref{assump:hek} holds, we have
	\begin{align+}
		\left( 1-\frac{1}{\k}\ln\frac{1}{\epsilon} - \frac{2}{\n-\k}\right)^{\k-1}
		\ge
		\epsilon - 2\cdot\frac{\k-1}{\n-\k}. \nonumber
	\end{align+}
	%\end{lem}
\end{restatable}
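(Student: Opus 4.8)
The plan is to disentangle the two distinct roles played by the quantity $\frac{2}{\n-\k}$: it sits once inside the base of the $(\k-1)$-st power on the left, and once as the coefficient of the subtracted term $2\cdot\frac{\k-1}{\n-\k}$ on the right. I would abbreviate $t\coloneqq\ln\frac{1}{\epsilon}\in(0,1]$ (the range coming from $1/\e\le\epsilon<1$) and $c\coloneqq\frac{2}{\n-\k}$, so the target reads $\left(1-\frac{t}{\k}-c\right)^{\k-1}\ge\e^{-t}-(\k-1)c$. Under \Cref{assump:hek} one has $\n-\k\ge2\k$, hence $c\le\frac{1}{\k}\le1-\frac{1}{\k}\le1-\frac{t}{\k}$, which guarantees that the base $1-\frac{t}{\k}-c$ stays nonnegative along the whole segment from $0$ to $c$ (and indeed this is the only place where $\n\ge3\k$ enters).

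Next I would peel off the in-power occurrence of $c$ by a first-order (tangent-line) convexity bound. For $\k\ge2$ the map $c\mapsto F(c)\coloneqq\left(1-\frac{t}{\k}-c\right)^{\k-1}$ is convex on the segment just identified, so it lies above its tangent at $c=0$:
$$F(c)\ge F(0)+F'(0)\,c=\left(1-\frac{t}{\k}\right)^{\k-1}-(\k-1)\left(1-\frac{t}{\k}\right)^{\k-2}c.$$
Since $0\le1-\frac{t}{\k}\le1$ and $\k-2\ge0$, the factor $\left(1-\frac{t}{\k}\right)^{\k-2}\le1$, so that $F(c)\ge\left(1-\frac{t}{\k}\right)^{\k-1}-(\k-1)c$. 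This already produces exactly the $-(\k-1)c$ term we need, reducing the problem to the clean, $c$-free inequality $\left(1-\frac{t}{\k}\right)^{\k-1}\ge\e^{-t}$.

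Finally I would settle this clean inequality by monotonicity. Setting $h(t)\coloneqq(\k-1)\ln\!\left(1-\frac{t}{\k}\right)+t$, differentiation gives $h'(t)=1-\frac{\k-1}{\k-t}=\frac{1-t}{\k-t}$, which is nonnegative for $t\in[0,1]$ because $\k>t$. As $h(0)=0$, we obtain $h(t)\ge0$ on $[0,1]$, i.e. $(\k-1)\ln\!\left(1-\frac{t}{\k}\right)\ge-t$; exponentiating yields $\left(1-\frac{t}{\k}\right)^{\k-1}\ge\e^{-t}=\epsilon$. Combining with the convexity step gives $F(c)\ge\epsilon-(\k-1)c=\epsilon-2\cdot\frac{\k-1}{\n-\k}$, which is the assertion.

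I expect the genuinely delicate point to be the clean inequality $\left(1-\frac{t}{\k}\right)^{\k-1}\ge\e^{-t}$. The exponent is $\k-1$ rather than $\k$, so the familiar estimate $\left(1-\frac{t}{\k}\right)^{\k}\le\e^{-t}$ points the wrong way, and it is precisely the restriction $t\le1$ (equivalently $\epsilon\ge1/\e$, the third part of \Cref{assump:hek}) that forces $h'\ge0$ and makes the reversal valid. By contrast, the tangent-line step and the bound $\n\ge3\k$ are comparatively routine bookkeeping whose only purpose is to decouple the in-power and out-of-power appearances of $\frac{2}{\n-\k}$ and to keep the base nonnegative.
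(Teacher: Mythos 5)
Your proof is correct and takes essentially the same route as the paper: both arguments reduce the claim to the two intermediate inequalities $\left(1-\frac{1}{\k}\ln\frac{1}{\epsilon}-\frac{2}{\n-\k}\right)^{\k-1}\ge\left(1-\frac{1}{\k}\ln\frac{1}{\epsilon}\right)^{\k-1}-2\cdot\frac{\k-1}{\n-\k}$ and $\left(1-\frac{1}{\k}\ln\frac{1}{\epsilon}\right)^{\k-1}\ge\epsilon$, using $\n\ge3\k$, $\k\ge2$, and $\epsilon\ge1/\e$ in the same places. The only differences are in how these two elementary facts are verified (you use a tangent-line convexity bound and monotonicity in $t=\ln\frac{1}{\epsilon}$, whereas the paper uses an induction on the exponent and monotonicity of $(1-\gamma/x)^{x-1}$ in $x$), which is immaterial.
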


We are now ready to prove the approximation guarantee of \sg\ 
for the case of non-monotone objectives. 

\begin{thm}\label{thm:nonmonotone}
	Let $\As$ be the output of \Cref{algorithm}. 
	If \Cref{assump:hek} holds, we have  
	\begin{align}
	\E[\f(\As)]
	\ge
	\left( \epsilon - 2\cdot\frac{\k-1}{\n-\k} \right)
	(1-\epsilon)
	\f(\Aso).
	\end{align}
	By setting $\epsilon=\frac{1}{2}+ \frac{\k-1}{\n-\k}$, 
	we obtain 
	\[
	\E[\f(\As)]
	\ge
	\frac{1}{4}
	\left( 1 - 2\cdot\frac{\k-1}{\n-\k} \right)^2
	\f(\Aso).
	\]
\end{thm}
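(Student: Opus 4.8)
The plan is to fuse the per-step gain bound of \Cref{lem:marginal} with the decaying lower bound of \Cref{lem:opt} into a single scalar recursion on $a_i\coloneqq\E[\f(\As_i)]$, solve that recursion, and close with \Cref{lem:eps}. Writing $\beta\coloneqq\frac{1-\epsilon}{\k}$ and $d\coloneqq 1-\frac{1}{\k}\ln\frac{1}{\epsilon}-\frac{2}{\n-\k}$, I would first rewrite \Cref{lem:marginal} using $\fdel{\Aso}{\As_{i-1}}=\f(\Aso\cup\As_{i-1})-\f(\As_{i-1})$ as $a_i\ge(1-\beta)a_{i-1}+\beta\,\E[\f(\Aso\cup\As_{i-1})]$. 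Since $\beta>0$, I may replace $\E[\f(\Aso\cup\As_{i-1})]$ by its lower bound $d^{\,i-1}\f(\Aso)$ from \Cref{lem:opt}, obtaining the clean recursion $a_i\ge(1-\beta)a_{i-1}+\beta\,d^{\,i-1}\f(\Aso)$ for $i=1,\dots,\k$.

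Next I would unroll this recursion. Because $\beta=\frac{1-\epsilon}{\k}\le\frac{1}{\k}\le 1$, the factor $1-\beta$ is non-negative, so iterating the inequality and then dropping the non-negative term $(1-\beta)^{\k}a_0$ (here $a_0=\f(\emptyset)\ge0$ by non-negativity) yields $a_\k\ge\beta\,\f(\Aso)\sum_{i=1}^{\k}(1-\beta)^{\k-i}d^{\,i-1}$.

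The crux is then to lower-bound this geometric-type sum by $\k\,d^{\,\k-1}$. The key comparison is $1-\beta\ge d$, which after cancelling the leading $1$ reduces to $\ln\frac{1}{\epsilon}+\frac{2\k}{\n-\k}\ge 1-\epsilon$; this follows from the elementary inequality $\ln\frac{1}{\epsilon}\ge 1-\epsilon$ together with $\frac{2\k}{\n-\k}\ge0$. One also checks $d\ge0$ under \Cref{assump:hek} (since $\ln\frac1\epsilon\le1$ and $\frac{2}{\n-\k}\le\frac1\k$ give $d\ge1-\frac2\k\ge0$), so each summand obeys $(1-\beta)^{\k-i}d^{\,i-1}\ge d^{\,\k-i}d^{\,i-1}=d^{\,\k-1}$, whence the sum is at least $\k\,d^{\,\k-1}$. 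Using $\k\beta=1-\epsilon$ this gives $a_\k\ge(1-\epsilon)\,d^{\,\k-1}\f(\Aso)$. Finally \Cref{lem:eps} supplies $d^{\,\k-1}\ge\epsilon-2\cdot\frac{\k-1}{\n-\k}$, and multiplying by the non-negative quantity $(1-\epsilon)\f(\Aso)$ produces the first displayed bound. The specialization is immediate: with $\epsilon=\frac{1}{2}+\frac{\k-1}{\n-\k}$ both factors equal $\frac12-\frac{\k-1}{\n-\k}$, giving $\frac14\bigl(1-2\cdot\frac{\k-1}{\n-\k}\bigr)^2$, and one verifies $\epsilon\in[1/\e,1)$ using $\n\ge3\k$ (which forces $\frac{\k-1}{\n-\k}<\frac12$).

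I expect the only genuine obstacle to be the geometric-sum step, specifically recognizing that everything hinges on the comparison $1-\beta\ge d$, equivalently $\ln\frac1\epsilon\ge1-\epsilon$. This is precisely what collapses the sum to $\k$ copies of the smallest term $d^{\,\k-1}$ and thereby recovers the full factor $(1-\epsilon)=\k\beta$ that a naive single-step bound (comparing only $a_\k$ to $a_{\k-1}$) would lose by a factor of $\k$.
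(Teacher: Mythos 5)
Your proof is correct and follows essentially the same route as the paper's: both rest on combining \Cref{lem:marginal} with \Cref{lem:opt} into the recursion $a_i\ge(1-\beta)a_{i-1}+\beta d^{\,i-1}\f(\Aso)$, exploiting the comparison $1-\beta\ge d$ (i.e.\ $1-\epsilon\le\ln\frac{1}{\epsilon}$), and closing with \Cref{lem:eps}. The only cosmetic difference is that you unroll the recursion into a sum and bound each term by $d^{\,\k-1}$, whereas the paper proves the equivalent intermediate claim $\E[\f(\As_i)]\ge\frac{i}{\k}d^{\,i-1}(1-\epsilon)\f(\Aso)$ by induction.
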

Note that $1/\e\le\frac{1}{2}+\frac{\k-1}{\n-\k}<1$ holds 
since $\n\ge3\k$. 
The following proof is partly inspired by the technique used in \citep{buchbinder2014submodular}.

\begin{proof}[Proof of \Cref{thm:nonmonotone}]
	%	Once the first condition in \Cref{lem:nonmonotone_marginal} holds, 
	%	the approximation ratio is at least as large as 
	%	$\epsilon - \frac{\k-1}{\n-\k} \ge \left( \epsilon - \frac{\k-1}{\n-\k} \right)(1-\epsilon)$. 
	%	Below we assume that the second condition in \Cref{lem:nonmonotone_marginal} holds, 
	We prove 
	\begin{align}\label{eq:target}
	\begin{aligned}
	\frac{\E[\f(\As_{i})]}{\f(\Aso)}
	\ge{}
	\frac{i}{\k}\left( 1-\frac{1}{\k}\ln\frac{1}{\epsilon} 
	-\frac{2}{\n-\k}  \right)^{i-1}(1-\epsilon)
	\end{aligned}
	\end{align} 
	for $i=0,\dots,\k$
	by induction. 
	If $i=0$, the RHS of \eqref{eq:target} becomes $0$, 
	and so the inequality holds due to the non-negativity of $\f$. 
	Assume that \eqref{eq:target} holds for every $i^\prime=0,\dots,i-1$. 
	Then we have 
	\begin{align}
	\E[\f(\As_{i})]
	={}&
	\E[\f(\As_{i-1})] + \E[\f(\As_{i}) - \f(\As_{i-1})]
	\\
	\ge{}&
	\E[\f(\As_{i-1})] 
	+
	\frac{1-\epsilon}{\k}
	\E\left[ \f(\Aso\cup\As_{i-1}) - \f(\As_{i-1}) \right]	
	\\
	\tag{\Cref{lem:marginal}}
	\\
	={}&
	\left( 1 - \frac{1-\epsilon}{\k} \right)
	\E[\f(\As_{i-1})] 
	+
	\frac{1-\epsilon}{\k}
	\left( 1-\frac{1}{\k}\ln\frac{1}{\epsilon} - \frac{2}{\n-\k} \right)^{i-1}
	\f(\Aso)
	\\
	\tag{\Cref{lem:opt}}
	\\
	\ge{}&
	\left( 1-\frac{1}{\k}\ln\frac{1}{\epsilon} - \frac{2}{\n-\k} \right)
	\E[\f(\As_{i-1})] 
	 +
	\frac{1-\epsilon}{\k}
	\left( 1-\frac{1}{\k}\ln\frac{1}{\epsilon} - \frac{2}{\n-\k} \right)^{i-1}
	\f(\Aso)
	\\
	\tag{$1 - \epsilon \le \ln\frac{1}{\epsilon} \le \ln\frac{1}{\epsilon} + \frac{2\k}{\n-\k}$}
	\\
	\ge{}&
	\left( 1-\frac{1}{\k}\ln\frac{1}{\epsilon} - \frac{2}{\n-\k} \right)
	\times 
	\frac{i-1}{k}
	\left( 1-\frac{1}{\k}\ln\frac{1}{\epsilon} - \frac{2}{\n-\k} \right)^{i-2}
	(1-\epsilon)
	\f(\Aso)
	\\
	& +
	\frac{1-\epsilon}{\k}
	\left( 1-\frac{1}{\k}\ln\frac{1}{\epsilon} - \frac{2}{\n-\k} \right)^{i-1}
	\f(\Aso)
	\\
	\tag{Assumption of induction}
	\\
	={}&
	\frac{i}{\k}
	\left( 1-\frac{1}{\k}\ln\frac{1}{\epsilon} -\frac{2}{\n-\k}  \right)^{i-1}
	(1-\epsilon)
	\f(\Aso).
	\end{align}
	Hence \eqref{eq:target} holds for $i=0,\dots,\k$;  
	for $i=\k$, we have 
	\begin{align}
	\E[\f(\As_{\k})]
	\ge
	\left( 1-\frac{1}{\k}\ln\frac{1}{\epsilon} -\frac{2}{\n-\k}  \right)^{\k-1}
	(1-\epsilon)
	\f(\Aso).
	\end{align}
	%	Intuitively, 
	%	if $\n$ is sufficiently larger than $\k$, 
	%	we have 
	%	\[
	%	\left( 1-\frac{1}{\k}\ln\frac{1}{\epsilon} -\frac{1}{\n-\k}  \right)^{\k-1}
	%	\approx
	%	\left( 1-\frac{1}{\k}\ln\frac{1}{\epsilon}  \right)^{\k-1}
	%	\ge
	%	\epsilon.
	%	\]
	%	Thus the approximation ratio is about $\epsilon(1-\epsilon)\le\frac{1}{4}$, 
	%	where the maximum is attained with $\epsilon=\frac{1}{2}$. 
	%	Below we examine the approximation ratio more carefully assuming that 
	%	\Cref{assump:hek} holds. 
	%	\memo{this part is used twice; move Lemmas to above and introduce the following inequalities with numbers}
	%	Thanks to assumptions 
	%	$\k\ge2$ and $\n\ge2\k\ge\k+2$, 
	%	and $1/\e\le\epsilon\le1$, 
	%	we have 
	%	\[
	%	0  \le \frac{1}{\n-\k} \le 1 - \frac{1}{\k}\ln\frac{1}{\epsilon} \le 1.
	%	\]
	%	Therefore, thanks to \Cref{lem:lin} and $\k\ge2$, we obtain
	%	\begin{align}
	%	&\left( 1-\frac{1}{\k}\ln\frac{1}{\epsilon} -\frac{1}{\n-\k}  \right)^{\k-1}
	%	\\
	%	&\ge
	%	\left( 1-\frac{1}{\k}\ln\frac{1}{\epsilon}  \right)^{\k-1} -\frac{\k-1}{\n-\k}.
	%	\end{align}
	%	Furthermore, since we have $1/\e\le\epsilon\le1$ and $\k\ge2$, 
	%	we have 
	%	\[
	%	\left( 1-\frac{1}{\k}\ln\frac{1}{\epsilon}  \right)^{\k-1} \ge \e^{-\ln\frac{1}{\epsilon}} = \epsilon.
	%	\]
	Finally, by using \Cref{lem:eps}, 
	we can lower bound the approximation ratio by $\left( \epsilon - 2\cdot\frac{\k-1}{\n-\k} \right)(1-\epsilon)$. 
\end{proof}

Note that, while \Cref{lem:marginal} motivates us to let $\epsilon$ to be small, 
the opposite is true regarding \Cref{lem:opt}. 
Thus we set $\epsilon\approx1/2$ to balance the effects of the two inequalities.

\newcommand{\D}{D}
\newcommand{\Abaro}{{\overline\Aso}}

\renewcommand{\N}{N}
\newcommand{\mi}{{m_i}}
\newcommand{\Mi}{{M_i}}

\renewcommand{\r}{r}
\newcommand{\sged}[2]{\sg-({#1},{#2})}

\newcommand{\hgd}{\text{H}}

\subsection{$\frac{1}{4}(1-\delta)^2$-approximation of Modified \sg}\label{subsec:sged}
As shown in \Cref{subsec:sg}, 
the approximation ratio of \sg\ becomes close to ${1}/{4}$ 
if $\n\gg\k$. 
In this section, we first consider improving the ratio by adding sufficiently many 
dummy elements to $V$. We then present modified \sg\ 
that can 
achieve a $\frac{1}{4}(1-\delta)^2$-approximation guarantee 
without using dummy elements explicitly. 

Let $\D$ be a set of dummy elements and $\Vbar=V\cup\D$; 
i.e., 
we have $\fdel{\a}{\As}=0$ 
for any $\As\subseteq \Vbar$ and $\a\in\D$.  
We add sufficiently many dummy elements to $V$ 
so that $\N\coloneqq|\Vbar|$ 
becomes equal to $\max\{\n, \k + \ceil{(2\k-1)/\delta}\}$, 
where $\delta\in(0,1)$ is an input parameter; 
smaller $\delta$ means that we add more dummy elements.  
Note that we have 
$\N\ge\k+2(\k-1)/\delta$ and $\N\ge\n$. 
We can also easily prove $\N\ge3\k$ by induction; 
this enables us to remove the second assumption of \Cref{assump:hek}.  

We now consider performing \sg\ on $\Vbar$.  
Let $\Abar$ be the output of \sg\ and $\Abaro=\argmax_{S\subseteq\Vbar:|S|\le\k}\f(S)$.  
Thanks to \Cref{thm:nonmonotone}, we have
\begin{align}
\E[\f(\Abar)]
\ge
\left( \epsilon - 2\cdot\frac{\k-1}{\N-\k} \right)
(1-\epsilon)
\f(\Abaro).
\end{align}
If we set $\epsilon=\frac{1}{2} + \frac{\k-1}{\N - \k}$, we obtain 
\begin{align}
\E[\f(\Abar)]
\ge 
\frac{1}{4}\left( 1 -  2\cdot\frac{\k-1}{\N - \k}\right)^2 \f(\Abaro)
\ge \frac{1}{4}(1-\delta)^2 \f(\Abaro)
\ge
\frac{1}{4}(1-\delta)^2 \f(\Aso), 
\end{align}
where 
the second inequality comes from $\N\ge\k+2(\k-1)/\delta$ and 
the last inequality comes from $V\subseteq\Vbar$. 
Furthermore, since 
no elements with zero marginal gains are added to the current solution in each iteration, 
$\Abar$ includes no elements in $\D$ 
(i.e., $\Abar\subseteq V$).
Therefore, 
$\Abar$ is a feasible $\frac{1}{4}(1-\delta)^2$-approximate solution. 

We then discuss the oracle complexity of performing \sg\ on $\Vbar$. 
In each iteration, we sample 
$\ceil{\sbar}$ elements to get $\Rs$, where 
$\sbar\coloneqq{\frac{\N}{\k}\ln\frac{1}{\epsilon}}$,  
and then we compute $\a_i=\argmax_{\a\in\Rs} \fdel{\a}{\As_{i-1}}$.  
Note that, 
if $\a\in\D$, 
we do not need to compute $\fdel{\a}{\As_{i-1}}$ since the value is always equal to $0$; 
i.e., any $\a\in\D$ is taken out of consideration.  
Therefore, 
only the number of elements belonging to $\Rs\cap V$ matters to the oracle complexity, 
which conforms to the hypergeometric distribution 
with 
a population of size ${|\Vbar\bs\As_{i-1}|}$, 
$\ceil{\sbar}$ draws, 
and 
${|V\bs\As_{i-1}|}$ targets. 
We denote the distribution by 
$\hgd(\ceil{\sbar}, {|V\bs\As_{i-1}|}, {|\Vbar\bs\As_{i-1}|})$.  
Note that its mean is bounded as follows: 
\begin{align}
\ceil{\sbar}\frac{|V\bs\As_{i-1}|}{|\Vbar\bs\As_{i-1}|}
\le
\left( \frac{\N}{\k}\ln\frac{1}{\epsilon} + 1\right)
\frac{\n}{\N-\k}
\le
\frac{\n}{\k}\ln\frac{1}{\epsilon} + \frac{\n}{\k-1}\delta, 
\end{align}
where the last inequality comes from $\N\ge\k+2(\k-1)/\delta$ and $\epsilon\ge1/\e$. 
Thus, the total oracle complexity is at most
$\n\ln\frac{1}{\epsilon} + \n\delta\frac{\k}{\k-1}$ in expectation, 
which can be 
arbitrarily close to $\n\ln\frac{1}{\epsilon}$ 
if $\delta$ is sufficiently small. 
Therefore, 
by setting $\epsilon=\frac{1}{2} + \frac{\k-1}{\N - \k}\ge \frac{1}{2}$, 
we can achieve a $\frac{1}{4}(1-\delta)^2$-approximation 
guarantee 
with at most $\n\ln2 + \n\delta\frac{\k}{\k-1}$ oracle queries in expectation.  Furthermore, 
the worst-case oracle complexity is also bounded by  
\[
\k\ceil{\sbar} 
=
\N\ln\frac{1}{\epsilon} + \k
\le
\max\left\{\n, \k+{\frac{2\k}{\delta}}\right\} \times \ln\frac{1}{\epsilon} + \k.
\]

\begin{algorithm}[tb]
	\caption{Modified \sg}  \label{algorithm:delta}
	\begin{algorithmic}[1]
		\State $\N\gets \max\{\n, \k+\ceil{{(2\k-1)}/{\delta}}\}$ and $\sbar\gets\frac{\N}{\k}\ln\frac{1}{\epsilon}$ 
		\State $\As_0\gets\emptyset$ 
		\For{$i=1,\dots,k$}
		\State Draw $\r\sim\hgd\left(\ceil{\sbar}, {|V\bs\As_{i-1}|}, {\N - |\As_{i-1}|} \right)$ 
		\State Get $\Rs$ by sampling $\r$ elements from $V\bs\As_{i-1}$
		\State $\a_i\gets\argmax_{\a\in\Rs}\fdel{\a}{\As_{i-1}}$
		\If{$\fdel{\a_i}{\As_{i-1}}>0$} $\As_i\gets\As_{i-1}\cup\{\a_i\}$
		\Else{} $\As_i\gets\As_{i-1}$
		\EndIf
		\EndFor
		\State \Return $\As_\k$
	\end{algorithmic}
\end{algorithm}

Finally, we see that no dummy elements are needed explicitly. 
As mentioned above, 
only the elements in $\Rs\cap V$ 
affects the behavior of \sg\ performed on $\Vbar$, 
and 
so an algorithm with the same behavior 
can be obtained by sampling $\Rs\subseteq V$ 
as follows: 
Draw 
$\r\in[0,\ceil{\sbar}]$ from the hypergeometric distribution 
$\hgd\left(\ceil{\sbar}, {|V\bs\As_{i-1}|}, {\N - |\As_{i-1}|} \right)$
and get $\Rs$ by sampling $\r$ elements uniformly at random from $V\bs\As_{i-1}$ without replacement. 
\Cref{algorithm:delta}, called modified \sg, presents the details.  
To conclude, 
we obtain the following result: 

\begin{thm}\label{thm:sged} 
	Fix $\epsilon$ and $\delta$ so that 
	$\epsilon\in[1/\e,1)$ and $\delta\in(0,\epsilon)$ hold, respectively.   
	Then, \Cref{algorithm:delta} outputs solution $A$
	that satisfies
	\begin{align}
	\E[\f(\As)]
	\ge
	\left( \epsilon - \delta \right)
	(1-\epsilon)
	\f(\Aso).
	\end{align}
	The expected and worst-case oracle complexities are at most
	$\n\ln\frac{1}{\epsilon} + \n\delta\frac{\k}{\k-1}$ 
	and 
	$\max\left\{\n, \k+{\frac{2\k}{\delta}}\right\} \times \ln\frac{1}{\epsilon} + \k$, 
	respectively. 
	If we set $\epsilon=\frac{1}{2} + \frac{\k-1}{\N - \k}$, 
	we have 
	\[
	\E[\f(\As)]
	\ge
	\frac{1}{4}
	\left( 1 - \delta \right)^2
	\f(\Aso).
	\]
	The expected and worst-case oracle complexities are bounded by $\n\ln2 + \n\delta\frac{\k}{\k-1}$ 
	and 
	$\max\left\{\n, \k+{\frac{2\k}{\delta}}\right\} \times \ln2 + \k$, 
	respectively.   
\end{thm}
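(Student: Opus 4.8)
The plan is to prove every claim by reducing \Cref{algorithm:delta} to \sg\ (\Cref{algorithm}) run on the augmented ground set $\Vbar = V \cup \D$, and then invoking \Cref{thm:nonmonotone}. First I would record that $\N = \max\{\n, \k + \ceil{(2\k-1)/\delta}\}$ satisfies $\N \ge 3\k$ (as observed before the statement) together with $\N \ge \k + 2(\k-1)/\delta$, so that \Cref{assump:hek} holds with $\N$ in place of $\n$ for every $\epsilon \in [1/\e,1)$. Applying \Cref{thm:nonmonotone} on $\Vbar$ then gives $\E[\f(\Abar)] \ge \left(\epsilon - 2\cdot\frac{\k-1}{\N-\k}\right)(1-\epsilon)\f(\Abaro)$, and the two elementary bounds $2\cdot\frac{\k-1}{\N-\k}\le\delta$ (from $\N-\k\ge 2(\k-1)/\delta$) and $\f(\Abaro)\ge\f(\Aso)$ (since $V\subseteq\Vbar$, so $\Aso$ is feasible on $\Vbar$), combined with $1-\epsilon\ge0$, upgrade this to the claimed $(\epsilon-\delta)(1-\epsilon)\f(\Aso)$.

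The central step, and the one I expect to require the most care, is the distributional equivalence between \Cref{algorithm:delta} and \sg\ run on $\Vbar$. Since every $\a\in\D$ has $\fdel{\a}{\As}=0$ and \sg\ rejects elements of nonpositive marginal gain, no dummy is ever selected, so $\Abar\subseteq V$ and the argmax in each iteration depends only on $\Rs\cap V$. I would argue that, conditioned on $\As_{i-1}\subseteq V$, the count $|\Rs\cap V|$ produced by sampling $\ceil{\sbar}$ elements of $\Vbar\bs\As_{i-1}$ without replacement is exactly hypergeometric, namely $\hgd(\ceil{\sbar},|V\bs\As_{i-1}|,\N-|\As_{i-1}|)$, and that conditioned on this count the selected real elements are uniform in $V\bs\As_{i-1}$. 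Hence drawing $\r$ from this distribution and then sampling $\r$ uniform elements of $V\bs\As_{i-1}$, as \Cref{algorithm:delta} does, reproduces the joint law of the \sg\ iterates on $\Vbar$; consequently the approximation bound above transfers verbatim to the output of \Cref{algorithm:delta}.

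For the oracle complexity I would note that the only queries are the evaluations $\fdel{\a}{\As_{i-1}}$ over $\a\in\Rs$, numbering $\r=|\Rs|$, and bound the expectation of $\r$ in iteration $i$ by the hypergeometric mean $\ceil{\sbar}\,\frac{|V\bs\As_{i-1}|}{|\Vbar\bs\As_{i-1}|}\le\left(\frac{\N}{\k}\ln\frac{1}{\epsilon}+1\right)\frac{\n}{\N-\k}$. Using $\ln\frac{1}{\epsilon}\le1$ (from $\epsilon\ge1/\e$) and $\N-\k\ge 2(\k-1)/\delta$, this is at most $\frac{\n}{\k}\ln\frac{1}{\epsilon}+\frac{\n}{\k-1}\delta$, so summing over the $\k$ iterations yields the expected bound $\n\ln\frac{1}{\epsilon}+\n\delta\frac{\k}{\k-1}$. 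For the worst case each iteration costs at most $\ceil{\sbar}$ queries, giving total at most $\k\ceil{\sbar}\le\N\ln\frac{1}{\epsilon}+\k\le\max\{\n,\k+\frac{2\k}{\delta}\}\ln\frac{1}{\epsilon}+\k$, where the last step uses $\ceil{(2\k-1)/\delta}\le 2\k/\delta$ for $\delta\in(0,1)$.

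Finally I would specialize to $\epsilon=\frac{1}{2}+\frac{\k-1}{\N-\k}$, checking $\epsilon\in[1/\e,1)$ from $\N\ge3\k$, so that the factor becomes $\frac{1}{4}\left(1-2\cdot\frac{\k-1}{\N-\k}\right)^2\ge\frac{1}{4}(1-\delta)^2$ by the same bound $2\cdot\frac{\k-1}{\N-\k}\le\delta$ and monotonicity of $(1-x)^2$ on $[0,1]$; the stated complexity bounds then follow by substituting $\ln\frac{1}{\epsilon}\le\ln2$. Apart from the sampling reduction, every ingredient is routine arithmetic already assembled in the discussion preceding the statement, so the proof amounts to citing \Cref{thm:nonmonotone} and verifying the hypergeometric equivalence.
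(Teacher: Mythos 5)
Your proposal is correct and follows essentially the same route as the paper: augment $V$ with dummy elements to reach $\N=\max\{\n,\k+\ceil{(2\k-1)/\delta}\}$, apply \Cref{thm:nonmonotone} on $\Vbar$, use $2\cdot\frac{\k-1}{\N-\k}\le\delta$ and $\f(\Abaro)\ge\f(\Aso)$, observe that dummies are never selected so the output is feasible, and replace the explicit dummies by the hypergeometric draw of $|\Rs\cap V|$; the oracle-complexity bounds are derived identically. The only point you make more explicit than the paper is the distributional equivalence (that conditioned on the count the real elements are uniform in $V\bs\As_{i-1}$), which the paper asserts without elaboration.
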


Note that \Cref{algorithm:delta} can also achieve 
a ($1-\frac{1}{\e}-\epsilon$)-approximation guarantee 
if $\f$ is monotone since it is equivalent to \Cref{algorithm} performed on 
$\Vbar$ and the obtained solution is feasible as explained above.

\newcommand{\Ab}{{\text{$\mathbf{A}$}}}
\newcommand{\Abt}{{\text{$\mathbf{A}^\top$}}}
\newcommand{\Xb}{{\text{$\mathbf{X}$}}}
\newcommand{\Ib}{{\text{$\mathbf{I}$}}}
\newcommand{\msg}{MSG}

\section{EXPERIMENTS}\label{sec:experiments}	

\begin{figure}[tb]
	\centering
	\begin{minipage}[t]{.3\textwidth}
		\includegraphics[width=1.0\textwidth]{./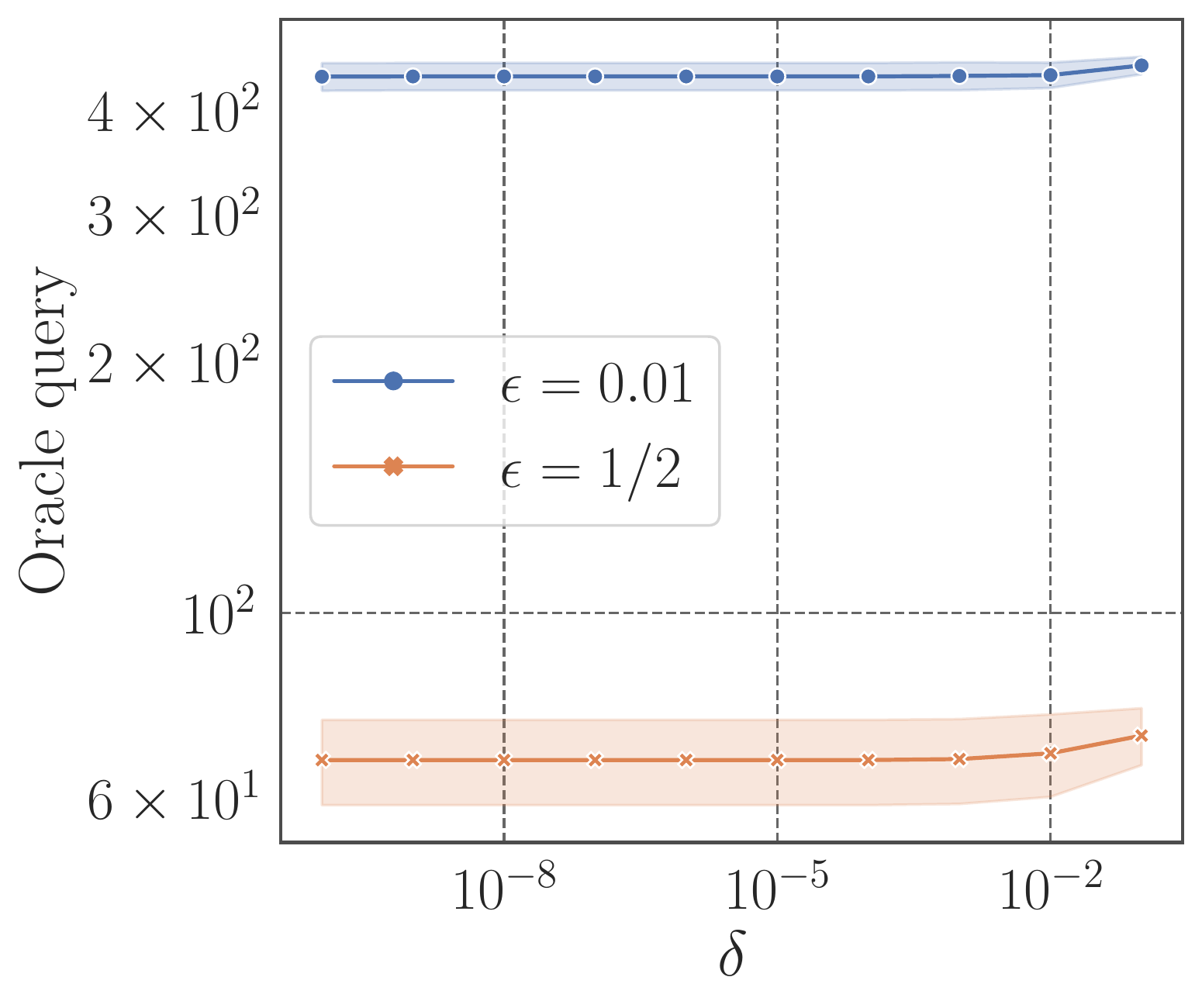}
		\subcaption{Oracle Query (semi-log)}
		\label{fig:delta_o}
	\end{minipage}
	\begin{minipage}[t]{.3\textwidth}
		\includegraphics[width=1.0\textwidth]{./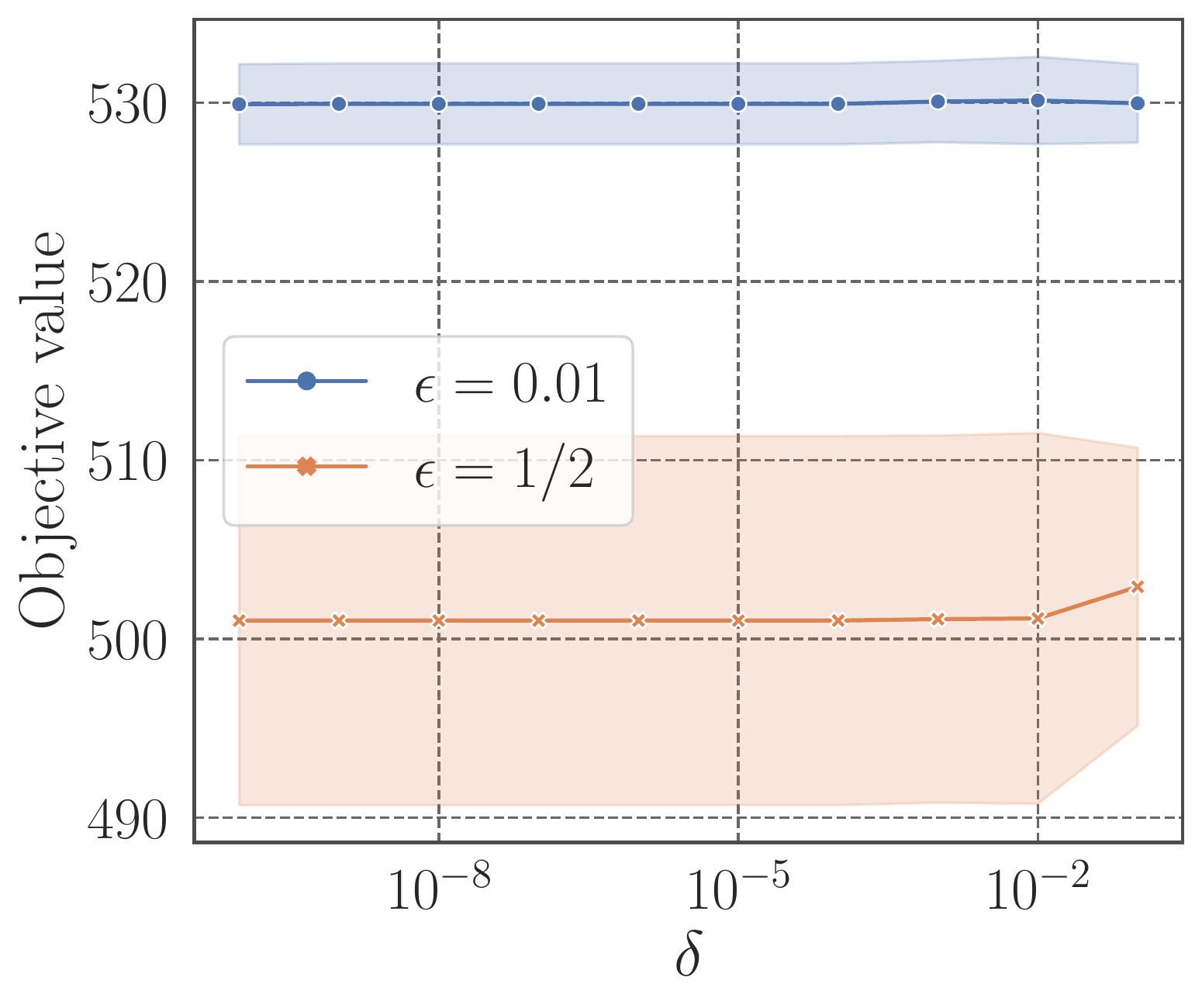}
		\subcaption{Objective Value}
		\label{fig:delta_v}
	\end{minipage}
	\caption{\msg\ Performance with Various $\delta$ Values.}
\end{figure}

\begin{figure*}[tb]
	\centering
	\begin{minipage}[t]{0.32\textwidth}
		\includegraphics[width=.88\textwidth]{./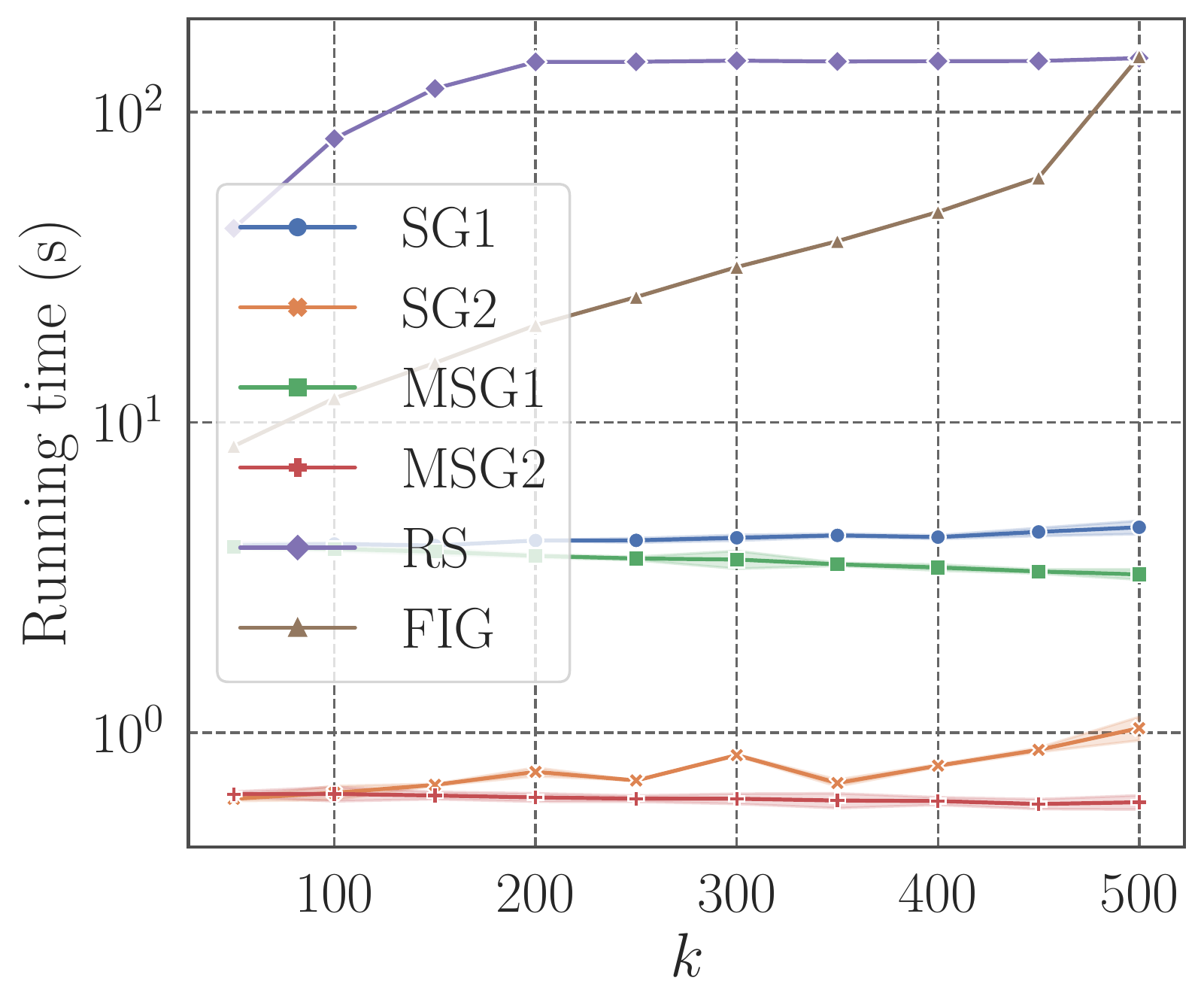}
		\subcaption{ER, Running Time (semi-log)}
		\label{fig:er_t}
	\end{minipage}
	\begin{minipage}[t]{0.32\textwidth}
		\includegraphics[width=.88\textwidth]{./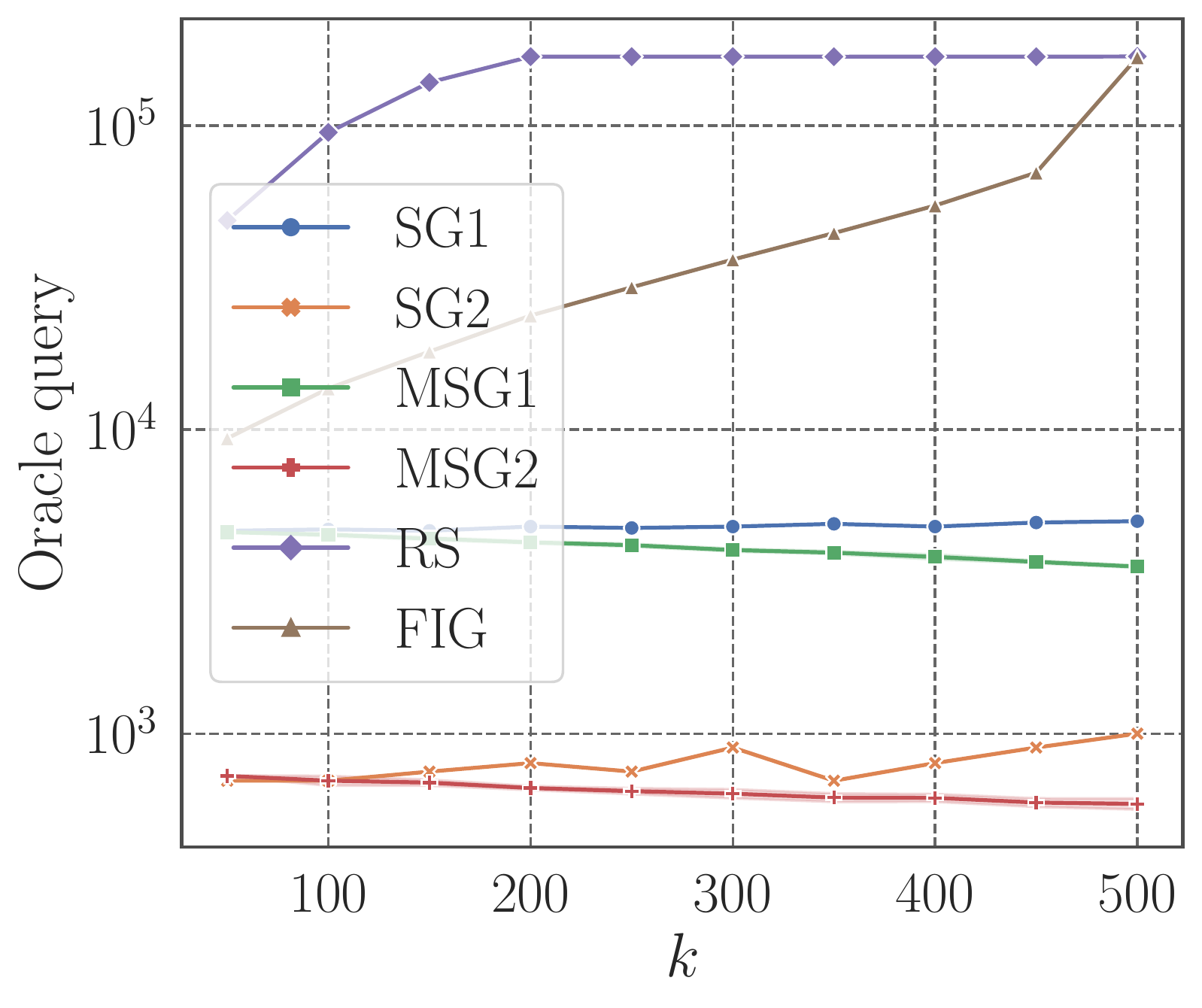}
		\subcaption{ER, Oracle Query (semi-log)}
		\label{fig:er_o}
	\end{minipage}
	\begin{minipage}[t]{0.32\textwidth}
		\includegraphics[width=.88\textwidth]{./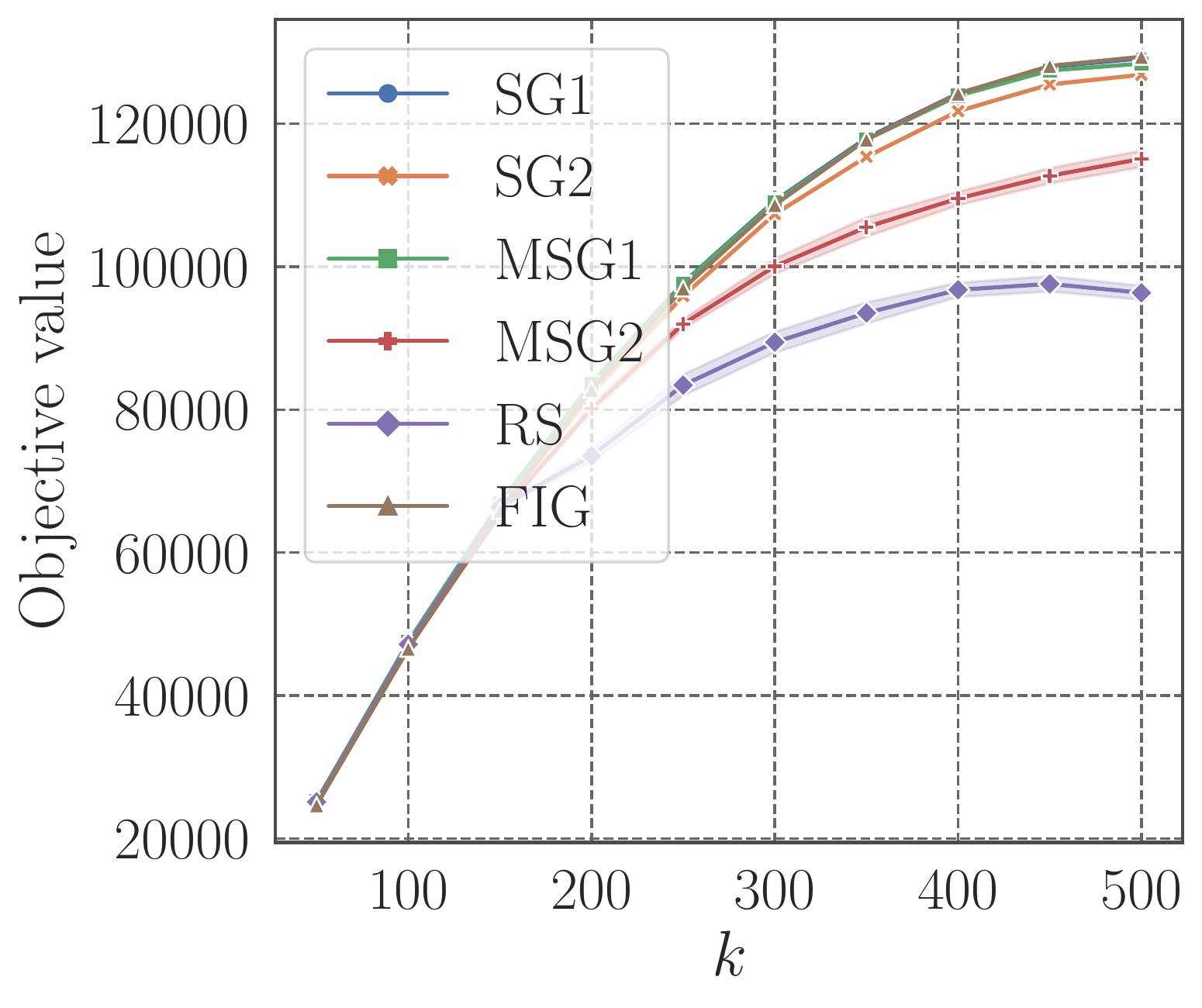}
		\subcaption{ER, Objective Value}
		\label{fig:er_v}
	\end{minipage}
	\begin{minipage}[t]{0.32\textwidth}
		\includegraphics[width=.88\textwidth]{./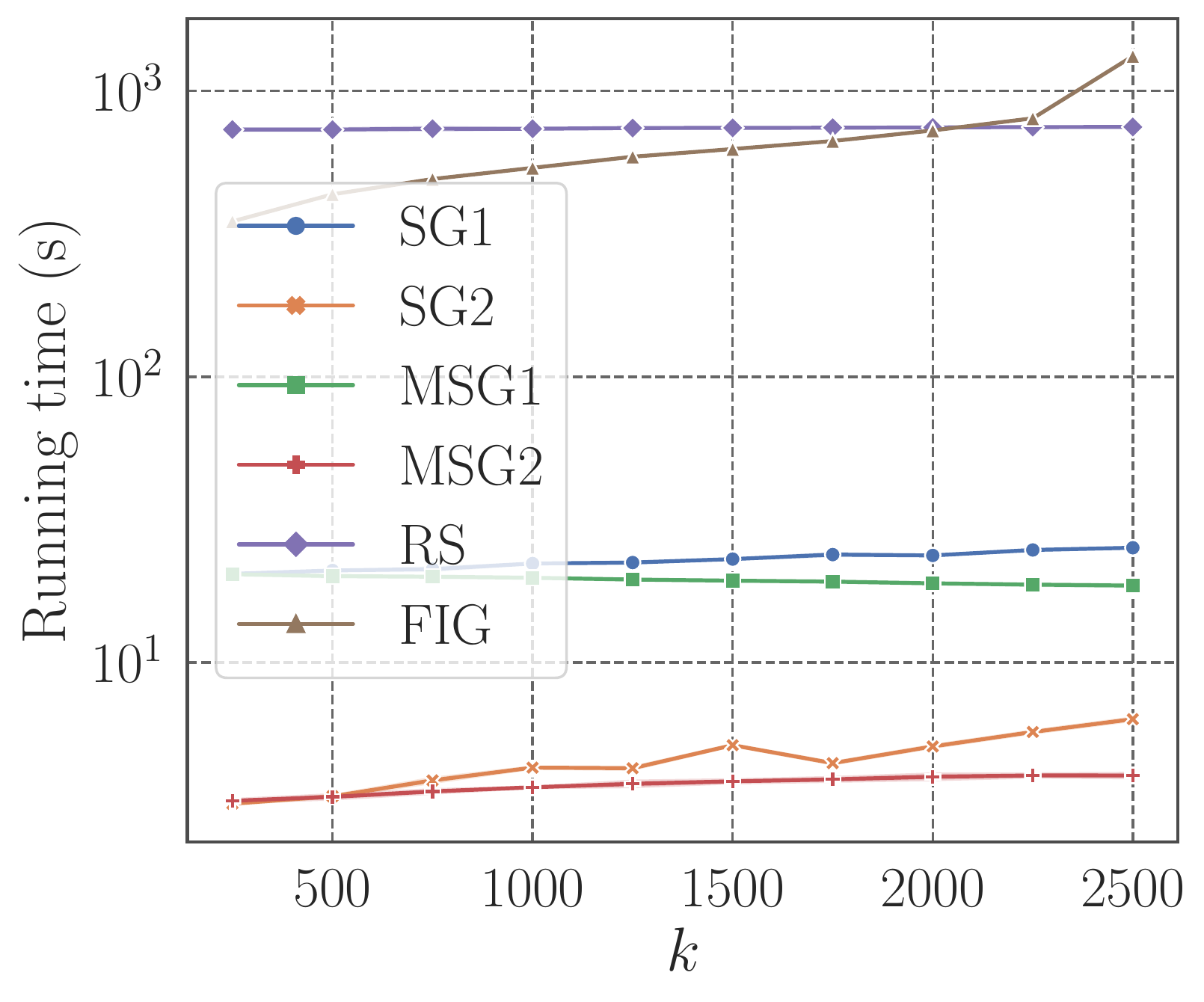}
		\subcaption{BA, Running Time (semi-log)}
		\label{fig:ba_t}
	\end{minipage}
	\begin{minipage}[t]{0.32\textwidth}
		\includegraphics[width=.88\textwidth]{./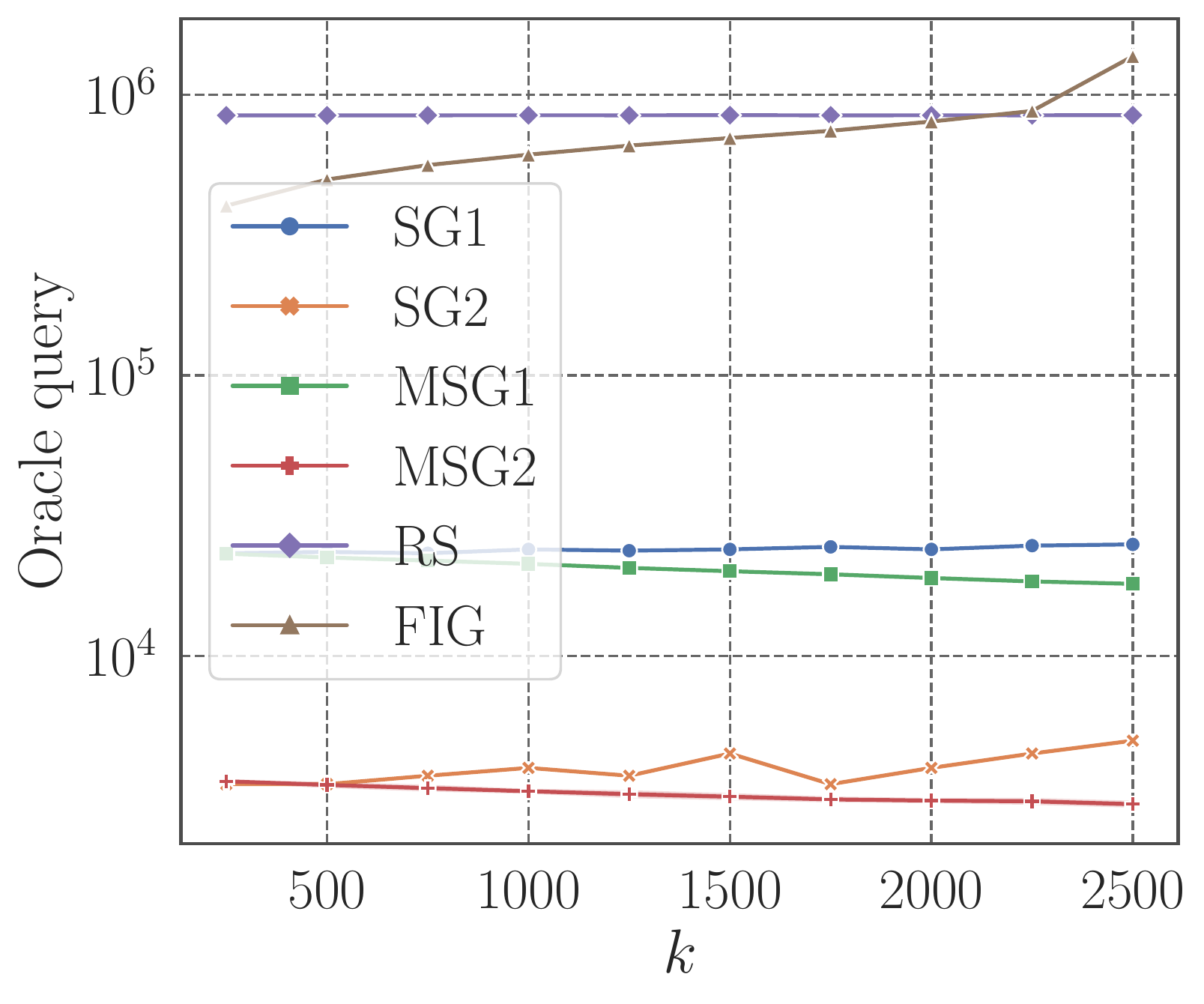}
		\subcaption{BA, Oracle Query (semi-log)}
		\label{fig:ba_o}
	\end{minipage}
	\begin{minipage}[t]{0.32\textwidth}
		\includegraphics[width=.88\textwidth]{./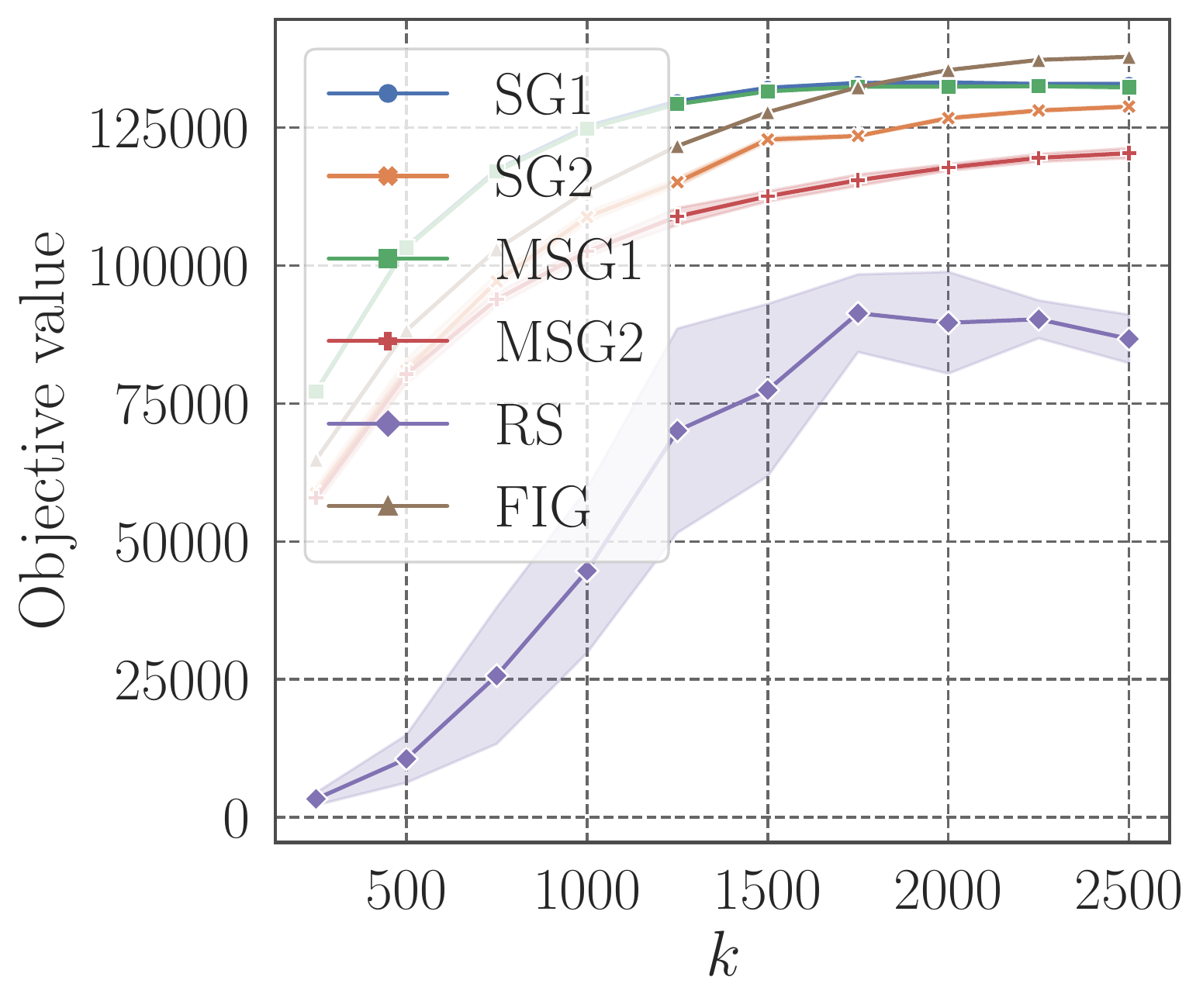}
		\subcaption{BA, Objective Value}
		\label{fig:ba_v}
	\end{minipage}
	\caption{Comparison of Algorithms with Synthetic Cut-function Maximization Instances.}
	\label{fig:syn}
\end{figure*}

We evaluate (modified) \sg\ via experiments. 
All the algorithms are implemented in Python3, 
and all the experiments are conducted
on 
a 64-bit macOS (Mojave) machine 
with 3.3 GHz Intel Core i7 CPUs and 16 GB RAM.
In \Cref{subsec:delta}, we examine 
the empirical effect of the $\delta$ value on the behavior of modified \sg.  
We then compare 
the following four kinds of algorithms with 
synthetic and real-world instances in \Cref{subsec:syn,subsec:real}, respectively.

\begin{itemize}
	\item \sg\ (\Cref{algorithm}):  
	We consider two algorithms, 
	\sg1 and \sg2, that 
	employ $\epsilon=0.01$ and $\epsilon=1/2$, respectively. 
	The approximation guarantee of \sg1 is not proved since it violates $\epsilon\ge1/\e$; 
	we here use it as a heuristic method and study its empirical behavior.  
	\sg2 achieves a 
	$\frac{1}{4}\left( 1 - 4\cdot\frac{\k-1}{\n-\k} \right)
	$-approximation guarantee if \Cref{assump:hek} holds. 
	
	\item Modified \sg\ (\msg) (\Cref{algorithm:delta}): 
	As with \sg, we consider two algorithms:  
	\msg1 ($\epsilon=0.01$) and \msg2 ($\epsilon=1/2$). 
	In \Cref{subsec:syn,subsec:real}, we let $\delta=0.1$. 
	The approximation guarantee of 
	\msg1 is not proved, while  		 
	\msg2 achieves a $0.2$-approximation guarantee. 
	
	\item Random sampling (RS) \citep{buchbinder2017comparing}: 
	A randomized $(1/\e-\epsilon)$-approximation algorithm 
	with $\Op{\frac{\n}{\epsilon^2}\ln\frac{1}{\epsilon}}$ oracle queries. 
	We set $\epsilon = 0.3$ as in the  experiments of \citep{kuhnle2019interlaced}, 
	which yields about a $0.07$-approximation guarantee.  
	
	\item Fast interlace greedy (FIG) \citep{kuhnle2019interlaced}: 
	A deterministic ($1/4-\epsilon$)-approximation algorithm with $\Op{\frac{\n}{\epsilon}\ln\frac{\n}{\epsilon}}$ oracle queries. 
	As in the experiments of \citep{kuhnle2019interlaced},  
	we set a parameter of the algorithm 
	(denoted by $\delta$ in the original paper) at $0.1$, 
	which yields a $0.1$-approximation guarantee.  
\end{itemize}

When implementing those algorithms in practice, 
we may employ various acceleration methods 
including the lazy evaluation \citep{minoux1978accelerated,leskovec2007cost} 
and memoization \citep{iyer2019memoization}.  
However, 
we here do not use them since our aim is to 
make simple and clear comparisons of the algorithms.

\subsection{Empirical Effects of $\delta$ Values}\label{subsec:delta}
We consider a synthetic instance of maximizing a cut function, 
which is non-negative and submodular. 
We construct an Erd\H{o}s--R{\'e}nyi (ER) random graph with 
$\n=100$ nodes, edge probability $p=1/2$, 
and uniform edge weights. 
The objective function to be maximized is a cut function defined on the graph, 
where we can choose up to $\k=10$ nodes.  
We apply \msg1 and \msg2 with $\delta=10^{-10},10^{-9}\dots,10^{-1}$ 
to the instance.

The numbers of oracle queries and objective values 
are shown in \Cref{fig:delta_o,fig:delta_v}, respectively, 
where each curve and error band indicate the average and standard deviation
calculated over $100$ trials.  
While the $\epsilon$ value affects the performance 
(smaller $\epsilon$ leads to better objective values and more queries), 
the increase in the $\delta$ value has little effect. 
This suggests that, 
while $\delta$ is introduced to handle the $2\cdot\frac{\k-1}{\n-\k}$ term 
that appears in the approximation ratio derived in \Cref{subsec:sg}, 
it is actually not essential. 
We leave it an open problem whether we can prove an approximation guarantee 
without introducing parameters like $\delta$. 
In what follows, we let $\delta=0.1$.

\subsection{Synthetic Instance}\label{subsec:syn}
We compare the algorithms with two synthetic cut-function maximization instances. 
One is a larger version of the above instance: 
We construct an ER random graph with $\n=1000$, $p=1/2$, and uniform edge 
weights. 
Another is defined with a Barab\'asi--Albert (BA) random graph 
with $\n=5000$ nodes and uniform edge weights, 
which is constructed as follows: 
Starting from $50$ nodes, 
we alternately add a new node and connect it to $50$ existing nodes.  
For the ER and BA instances, 
we consider various cardinality constraints with 
$\k=50,100,\dots,500$ and 
$\k=250, 500, \dots, 2500$, 
respectively. 
We apply \sg1, \sg2, \msg1, \msg2, RS, and FIG to the instances 
and observe the 
running times, numbers of oracle queries, and objective values. 
The results of the randomized algorithms are shown by the mean and standard deviation 
calculated over $10$ trails. 

\Cref{fig:syn} summarizes the results. 
With both ER and BA instances, 
\sg\ and \msg\ 
run much faster and 
require far fewer oracle queries than RS and FIG.  
For each $\epsilon$ value, 
\msg\ tends to be more efficient than \sg. 
Regarding the ER instances, 
\sg1, \msg1, and FIG achieve almost the same objective values. 
The objective value of \sg2 is slightly worse than them. 
\msg2 performs worse than those above, 
but it still outperforms RS by a considerable margin.  
As regards the BA instances, 
\sg1 and \msg1 outperform FIG when $\k$ is small, 
and the opposite is true when $\k$ is large. 
Objective values of \sg2 and \msg2 are worse than that of FIG, 
but they are far better than that of RS. 
To conclude, \sg-style algorithms are far more efficient 
than the existing methods, while achieving comparable objective values.

\subsection{Real-world  Instance}\label{subsec:real}
We compare the algorithms with real-world instances. 
We employ the mutual information as an objective function. 
Given a positive semidefinite matrix $\Xb\in\R^{V\times V}$, 
we let $\Xb[S]$ denote the principal submatrix of $\Xb$ indexed by $S\subseteq V$. 
We define the entropy function as $H(S)\coloneqq\ln\det\Xb[S]$ 
($H(\emptyset)\coloneqq0$), 
which is submodular due to the Ky Fan's inequality. 
We assume 
that the smallest eigenvalue of $\Xb$ is larger than or equal to $1$, 
which makes the entropy function monotone and non-negative. 
The mutual information is defined as  
$\f(S) = H(S) + H(V\bs S) - H(V)$, 
which is known to be submodular \citep{krause2008near,sharma2015entropy}. 
The function is non-negative since 
$\f(S) = H(S) + H(V\bs S) - H(V) \ge H(\emptyset) = 0$ for any $S\subseteq V$ 
due to the 
submodularity of $H(\cdot)$ and $H(\emptyset)=0$.

We consider a feature selection instance 
based on mutual information maximization~\citep{iyer2012algorithms,sharma2015entropy}. 
Given a matrix $\Ab$, 
whose column indices correspond to features, 
we define the mutual information with $\Xb \coloneqq \Ib + \Ab^\top \Ab$.  
To obtain matrix $\Ab$, we use 
``Geographical Original of Music'' dataset available at \citep{olson2017pmlb}. 
The dataset has $117$ features, and we create additional $\binom{117}{2}$ 
second-order polynomial feature vectors as in \citep{bertsimas2016best}. 
By adding some of them to the original 117 features and 
normalizing the columns of resulting $\Ab$, 
we obtain $\n\times \n$ matrices $\Xb$ for $\n=200, 300, \dots, 1000$. 
We let $\k=200$.  
We apply the algorithms to the instances with various $\n$ values. 
The results are again shown by the mean and standard deviation over $10$ trials. 

\Cref{fig:mi} summarizes the results. 
As with the results of synthetic instances, 
the \sg-style algorithms are far more efficient than FIG and RS. 
Oracle queries of all the algorithms increase very slowly with $\n$ in the semi-log plot, 
which is consistent with the fact that 
their oracle complexities are (nearly) linear in $\n$. 
The results of objective values are also similar to those of the synthetic instances: 
The objective values of \sg-style algorithms are as good as or slightly worse 
than that of FIG, but they are far better than that of RS. 

\begin{figure*}[tb]
	\centering
	\begin{minipage}[t]{0.32\textwidth}
		\includegraphics[width=.88\textwidth]{./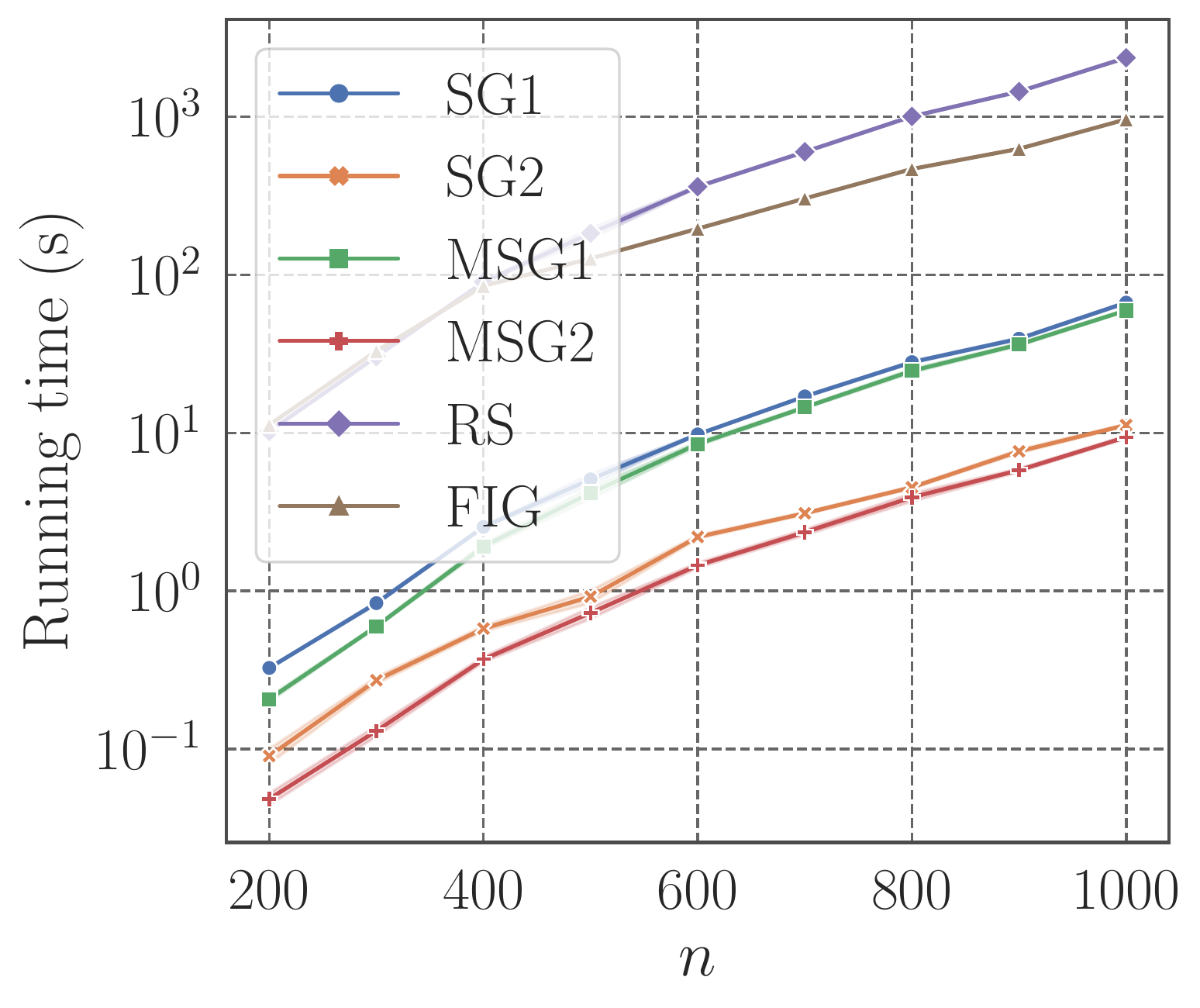}
		\subcaption{Running Time (semi-log)}
		\label{fig:mi_t}
	\end{minipage}
	\begin{minipage}[t]{0.32\textwidth}
		\includegraphics[width=.88\textwidth]{./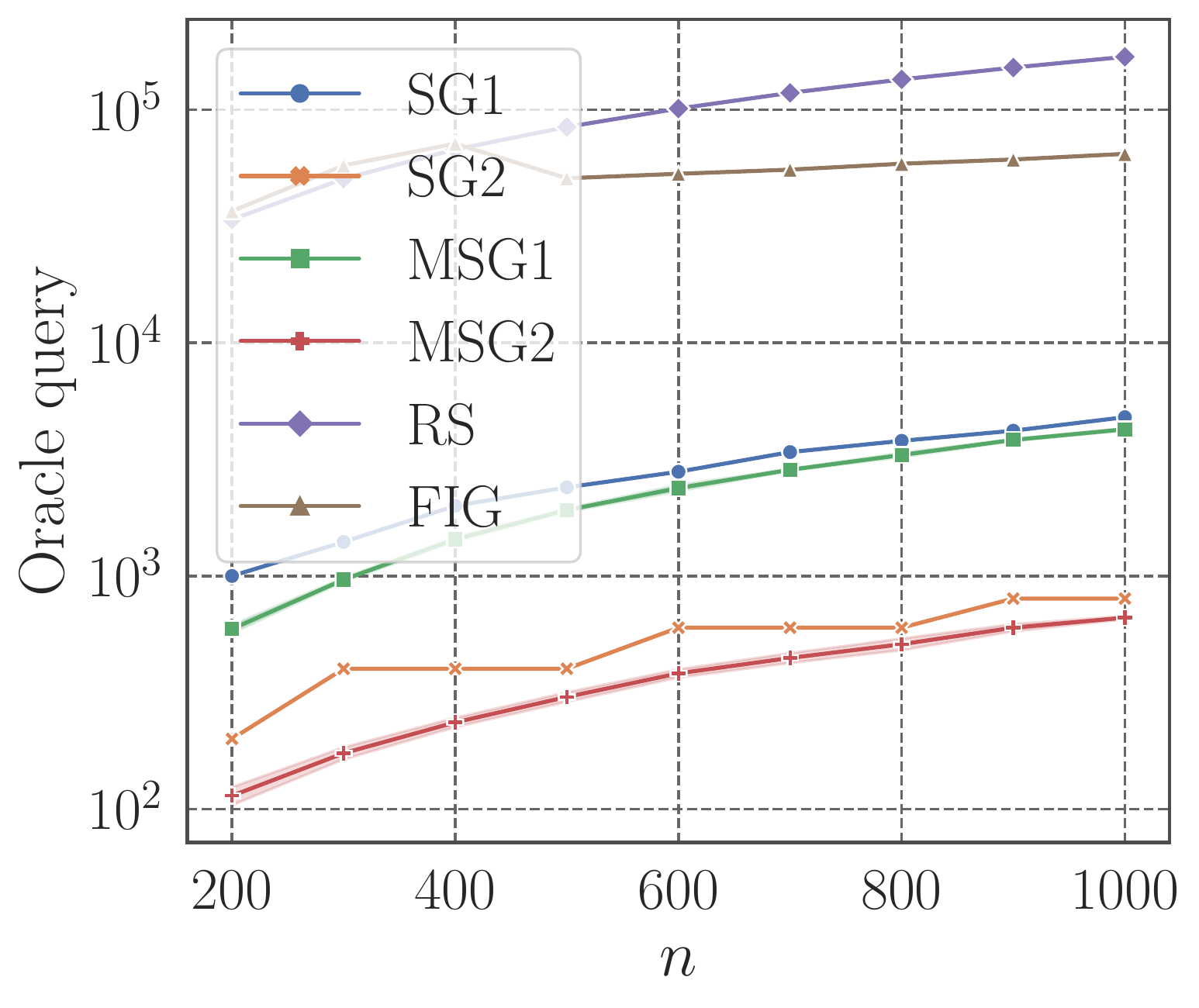}
		\subcaption{Oracle Query (semi-log)}
		\label{fig:mi_o}
	\end{minipage}
	\begin{minipage}[t]{0.32\textwidth}
		\includegraphics[width=.88\textwidth]{./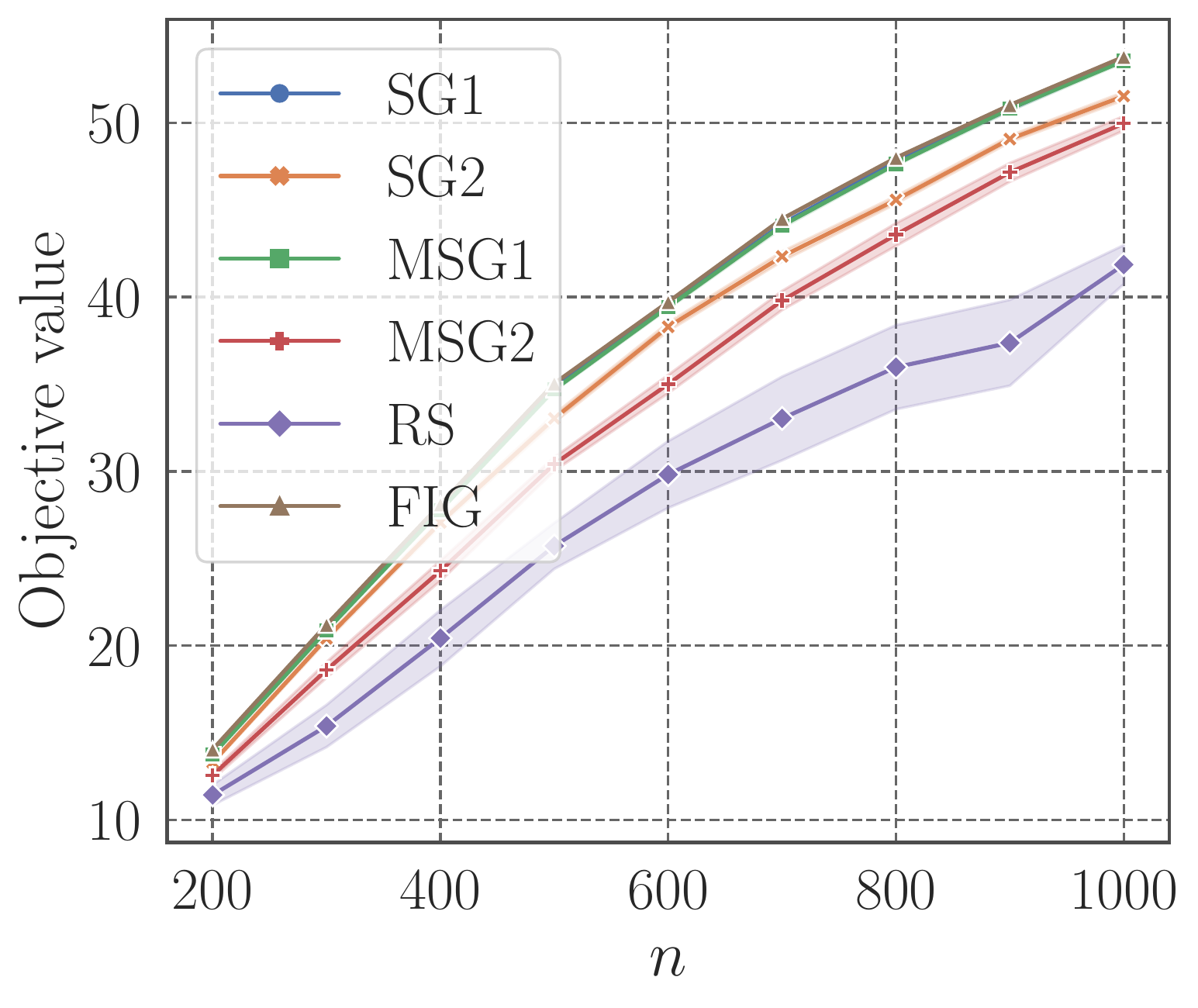}
		\subcaption{Objective Value}
		\label{fig:mi_v}
	\end{minipage}
	\caption{Comparison of Algorithms with Real-world Mutual Information Maximization Instances.}
	\label{fig:mi}
\end{figure*}	

\section{CONCLUSION AND DISCUSSION}\label{sec:conclusion}
We proved approximation guarantees of (modified) \sg\ for non-monotone submodular maximization with a cardinality constraint. 
We first proved a $\frac{1}{4}\left( 1 - 2\cdot\frac{\k-1}{\n-\k} \right)^2$-approximation guarantee of \sg\ under some assumptions;  
this yields a positive approximation ratio if $\n$ is sufficiently larger than $\k$.  
We then developed modified \sg\ and proved its $\frac{1}{4}(1-\delta)^2$-approximation guarantee 
without using the assumptions.  
%	Modified \sg\ is advantageous in that we can remove the assumption on $\n$ and 
%	obtain an approximation ratio arbitrarily close to $1/4$. 	
We also showed that modified \sg\ 
requires at most $\n\ln2 + \n\delta\frac{\k}{\k-1}$ and 
$\max\{\n, \k + \frac{2\k}{\delta}\}\times \ln2 + \k$
oracle queries in expectation 
and in the worst-case, respectively.  
This result provides a constant-factor approximation algorithm with the fewest oracle queries. 
Experiments demonstrated that (modified) \sg\ can 
run much faster and require far fewer oracle queries than existing methods 
while achieving comparable objective values. 

%	As we have seen in the experiment section, 
%	\sg\ tends to achieve better objective values than modified \sg\ 
%	even if $\n$ is not large enough relative to $\k$. 
%	Moreover, as mentioned in \Cref{subsec:delta}, 
%	the $\delta$ value of modified \sg\ little affects its empirical performance. 
%	These results suggest that 
%	it may be possible to prove an approximation guarantee 
%	of the vanilla \sg\ regardless of the $\n$ and $\k$ values.  
%	Furthermore, 
%	it may also be possible 
%	to improve the $1/4$ approximation ratio 
%	and 
%	to obtain guarantees for $\epsilon<1/\e$. 
%	These improvements in the approximation guarantee will be interesting future work. 

\subsubsection*{Acknowledgements}
The author is grateful to 
Kaito Fujii, 
Takanori Maehara, and anonymous reviewers for providing valuable comments.

{%\small
	\bibliography{mybib}

\begin{thebibliography}{}

\bibitem[Badanidiyuru and Vondr\'{a}k, 2014]{badanidiyuru2014fast}
Badanidiyuru, A. and Vondr\'{a}k, J. (2014).
\newblock Fast algorithms for maximizing submodular functions.
\newblock In {\em Proceedings of the 25th Annual ACM-SIAM Symposium on Discrete
  Algorithms}, pages 1497--1514. SIAM.

\bibitem[Balkanski et~al., 2018]{balkanski2018nonmonotone}
Balkanski, E., Breuer, A., and Singer, Y. (2018).
\newblock Non-monotone submodular maximization in exponentially fewer
  iterations.
\newblock In {\em Advances in Neural Information Processing Systems 31}, pages
  2353--2364. Curran Associates, Inc.

\bibitem[Bertsimas et~al., 2016]{bertsimas2016best}
Bertsimas, D., King, A., and Mazumder, R. (2016).
\newblock Best subset selection via a modern optimization lens.
\newblock {\em Ann. Statist.}, 44(2):813--852.

\bibitem[Buchbinder and Feldman, 2018]{buchbinder2018deterministic}
Buchbinder, N. and Feldman, M. (2018).
\newblock Deterministic algorithms for submodular maximization problems.
\newblock {\em ACM Trans. Algorithms}, 14(3):32:1--32:20.

\bibitem[Buchbinder et~al., 2014]{buchbinder2014submodular}
Buchbinder, N., Feldman, M., Naor, J.~S., and Schwartz, R. (2014).
\newblock Submodular maximization with cardinality constraints.
\newblock In {\em Proceedings of the 25th Annual ACM-SIAM Symposium on Discrete
  Algorithms}, pages 1433--1452. SIAM.

\bibitem[Buchbinder et~al., 2017]{buchbinder2017comparing}
Buchbinder, N., Feldman, M., and Schwartz, R. (2017).
\newblock Comparing apples and oranges: Query trade-off in submodular
  maximization.
\newblock {\em Math. Oper. Res.}, 42(2):308--329.

\bibitem[de~Veciana et~al., 2019]{veciana2019stochasticgreedy}
de~Veciana, G., Hashemi, A., and Vikalo, H. (2019).
\newblock Stochastic-greedy++: Closing the optimality gap in exact weak
  submodular maximization.
\newblock {\em arXiv preprint arXiv:1907.09064}.

\bibitem[Ene et~al., 2019]{ene2019submodular}
Ene, A., Nguyen, H., and Vladu, A. (2019).
\newblock Submodular maximization with matroid and packing constraints in
  parallel.
\newblock In {\em Proceedings of the 51st Annual ACM SIGACT Symposium on Theory
  of Computing}, pages 90--101. ACM.

\bibitem[Fahrbach et~al., 2019]{fahrbach2019non-monotone}
Fahrbach, M., Mirrokni, V., and Zadimoghaddam, M. (2019).
\newblock Non-monotone submodular maximization with nearly optimal adaptivity
  and query complexity.
\newblock In {\em Proceedings of the 36th International Conference on Machine
  Learning}, volume~97, pages 1833--1842. PMLR.

\bibitem[Feldman et~al., 2017]{feldman2017greed}
Feldman, M., Harshaw, C., and Karbasi, A. (2017).
\newblock Greed is good: Near-optimal submodular maximization via greedy
  optimization.
\newblock In {\em Proceedings of the 2017 Conference on Learning Theory},
  volume~65, pages 758--784. PMLR.

\bibitem[{Feldman} et~al., 2011]{feldman2011unified}
{Feldman}, M., {Naor}, J., and {Schwartz}, R. (2011).
\newblock A unified continuous greedy algorithm for submodular maximization.
\newblock In {\em Proceedings of the 2011 IEEE 52nd Annual Symposium on
  Foundations of Computer Science}, pages 570--579.

\bibitem[Gharan and Vondr\'{a}k, 2011]{gharan2011submodular}
Gharan, S.~O. and Vondr\'{a}k, J. (2011).
\newblock Submodular maximization by simulated annealing.
\newblock In {\em Proceedings of the 22nd Annual ACM-SIAM Symposium on Discrete
  Algorithms}, pages 1098--1116. SIAM.

\bibitem[Gupta et~al., 2010]{gupta2010constrained}
Gupta, A., Roth, A., Schoenebeck, G., and Talwar, K. (2010).
\newblock Constrained non-monotone submodular maximization: Offline and
  secretary algorithms.
\newblock In {\em Proceedings of the 6th International Conference on Internet
  and Network Economics}, pages 246--257. Springer-Verlag.

\bibitem[Harshaw et~al., 2019]{harshaw2019submodular}
Harshaw, C., Feldman, M., Ward, J., and Karbasi, A. (2019).
\newblock Submodular maximization beyond non-negativity: Guarantees, fast
  algorithms, and applications.
\newblock In {\em Proceedings of the 36th International Conference on Machine
  Learning}, volume~97, pages 2634--2643. PMLR.

\bibitem[{Hashemi} et~al., 2018]{hasemi2018randomized}
{Hashemi}, A., {Ghasemi}, M., {Vikalo}, H., and {Topcu}, U. (2018).
\newblock A randomized greedy algorithm for near-optimal sensor scheduling in
  large-scale sensor networks.
\newblock In {\em 2018 Annual American Control Conference}, pages 1027--1032.

\bibitem[Hassidim and Singer, 2017]{hassidim2017robust}
Hassidim, A. and Singer, Y. (2017).
\newblock Robust guarantees of stochastic greedy algorithms.
\newblock In {\em Proceedings of the 34th International Conference on Machine
  Learning}, volume~70, pages 1424--1432. PMLR.

\bibitem[Iyer and Bilmes, 2012]{iyer2012algorithms}
Iyer, R. and Bilmes, J. (2012).
\newblock Algorithms for approximate minimization of the difference between
  submodular functions, with applications.
\newblock In {\em Proceedings of the 28th Conference on Uncertainty in
  Artificial Intelligence}, pages 407--417. AUAI Press.

\bibitem[Iyer and Bilmes, 2019]{iyer2019memoization}
Iyer, R. and Bilmes, J. (2019).
\newblock A memoization framework for scaling submodular optimization to large
  scale problems.
\newblock In {\em Proceedings of the 22nd International Conference on
  Artificial Intelligence and Statistics}, volume~89, pages 2340--2349. PMLR.

\bibitem[Ji et~al., 2020]{ji2020stochastic}
Ji, S., Xu, D., Li, M., Wang, Y., and Zhang, D. (2020).
\newblock Stochastic greedy algorithm is still good: Maximizing submodular +
  supermodular functions.
\newblock In {\em Optimization of Complex Systems: Theory, Models, Algorithms
  and Applications}, pages 488--497. Springer International Publishing.

\bibitem[Khanna et~al., 2017]{khanna2017approximation}
Khanna, R., Elenberg, E.~R., Dimakis, A.~G., Ghosh, J., and Negahban, S.
  (2017).
\newblock On approximation guarantees for greedy low rank optimization.
\newblock In {\em Proceedings of the 34th International Conference on Machine
  Learning}, volume~70, pages 1837--1846. PMLR.

\bibitem[Krause et~al., 2008]{krause2008near}
Krause, A., Singh, A., and Guestrin, C. (2008).
\newblock Near-optimal sensor placements in gaussian processes: Theory,
  efficient algorithms and empirical studies.
\newblock {\em J. Mach. Learn. Res.}, 9(Feb):235--284.

\bibitem[Kuhnle, 2019]{kuhnle2019interlaced}
Kuhnle, A. (2019).
\newblock Interlaced greedy algorithm for maximization of submodular functions
  in nearly linear time.
\newblock In {\em Advances in Neural Information Processing Systems 32}, pages
  2371--2381. Curran Associates, Inc.

\bibitem[Lee et~al., 2010]{lee2010maximizing}
Lee, J., Mirrokni, V., Nagarajan, V., and Sviridenko, M. (2010).
\newblock Maximizing nonmonotone submodular functions under matroid or knapsack
  constraints.
\newblock {\em SIAM J. Discrete. Math.}, 23(4):2053--2078.

\bibitem[Leskovec et~al., 2007]{leskovec2007cost}
Leskovec, J., Krause, A., Guestrin, C., Faloutsos, C., VanBriesen, J., and
  Glance, N. (2007).
\newblock Cost-effective outbreak detection in networks.
\newblock In {\em Proceedings of the 13th ACM SIGKDD International Conference
  on Knowledge Discovery and Data Mining}, pages 420--429. ACM.

\bibitem[Lin and Bilmes, 2010]{lin2010multi}
Lin, H. and Bilmes, J. (2010).
\newblock Multi-document summarization via budgeted maximization of submodular
  functions.
\newblock In {\em Proceedings of Human Language Technologies: The 2010 Annual
  Conference of the North American Chapter of the Association for Computational
  Linguistics}, pages 912--920. Association for Computational Linguistics.

\bibitem[Minoux, 1978]{minoux1978accelerated}
Minoux, M. (1978).
\newblock Accelerated greedy algorithms for maximizing submodular set
  functions.
\newblock In {\em Proceedings of the 8th IFIP Conference on Optimization
  Techniques}, pages 234--243. Springer.

\bibitem[Mirzasoleiman et~al., 2016]{mirzasoleiman2016fast}
Mirzasoleiman, B., Badanidiyuru, A., and Karbasi, A. (2016).
\newblock Fast constrained submodular maximization: Personalized data
  summarization.
\newblock In {\em Proceedings of The 33rd International Conference on Machine
  Learning}, volume~48, pages 1358--1367. PMLR.

\bibitem[Mirzasoleiman et~al., 2015]{mirzasoleiman2015lazier}
Mirzasoleiman, B., Badanidiyuru, A., Karbasi, A., Vondr\'{a}k, J., and Krause,
  A. (2015).
\newblock Lazier than lazy greedy.
\newblock In {\em Proceedings of the 29th AAAI Conference on Artificial
  Intelligence}, pages 1812--1818. AAAI Press.

\bibitem[Nemhauser et~al., 1978]{nemhauser1978analysis}
Nemhauser, G.~L., Wolsey, L.~A., and Fisher, M.~L. (1978).
\newblock An analysis of approximations for maximizing submodular set
  functions-{I}.
\newblock {\em Math. Program.}, 14(1):265--294.

\bibitem[Olson et~al., 2017]{olson2017pmlb}
Olson, R.~S., La~Cava, W., Orzechowski, P., Urbanowicz, R.~J., and Moore, J.~H.
  (2017).
\newblock {PMLB}: {A} large benchmark suite for machine learning evaluation and
  comparison.
\newblock {\em BioData Min.}, 10(1):36.

\bibitem[Pan et~al., 2014]{pan2014parallel}
Pan, X., Jegelka, S., Gonzalez, J.~E., Bradley, J.~K., and Jordan, M.~I.
  (2014).
\newblock Parallel double greedy submodular maximization.
\newblock In {\em Advances in Neural Information Processing Systems 27}, pages
  118--126. Curran Associates, Inc.

\bibitem[Qian et~al., 2018]{qian2018approximation}
Qian, C., Yu, Y., and Tang, K. (2018).
\newblock Approximation guarantees of stochastic greedy algorithms for subset
  selection.
\newblock In {\em Proceedings of the 27th International Joint Conference on
  Artificial Intelligence}, pages 1478--1484. International Joint Conferences
  on Artificial Intelligence Organization.

\bibitem[Sharma et~al., 2015]{sharma2015entropy}
Sharma, D., Kapoor, A., and Deshpande, A. (2015).
\newblock On greedy maximization of entropy.
\newblock In {\em Proceedings of the 32nd International Conference on Machine
  Learning}, pages 1330--1338. PMLR.

\bibitem[{Song} et~al., 2017]{song2017deep}
{Song}, H.~O., {Jegelka}, S., {Rathod}, V., and {Murphy}, K. (2017).
\newblock Deep metric learning via facility location.
\newblock In {\em Proceedings of the 2017 IEEE Conference on Computer Vision
  and Pattern Recognition}, pages 2206--2214.

\bibitem[Vondr{\'a}k, 2013]{vondrak2013symmetry}
Vondr{\'a}k, J. (2013).
\newblock Symmetry and approximability of submodular maximization problems.
\newblock {\em SIAM J. Comput.}, 42(1):265--304.

\bibitem[Wei et~al., 2014]{wei2014fast}
Wei, K., Iyer, R., and Bilmes, J. (2014).
\newblock Fast multi-stage submodular maximization.
\newblock In {\em Proceedings of the 31st International Conference on Machine
  Learning}, volume~32, pages 1494--1502. PMLR.

\end{thebibliography}
	\bibliographystyle{apalike}
}

\appendix
\setcounter{equation}{0}
\setcounter{thm}{0}
\setcounter{algorithm}{0}
\renewcommand{\theequation}{A\arabic{equation}}

\theoremstyle{theorem}
\newtheorem*{lem_1}{Lemma 1}% Lemma
\newtheorem*{lem_4}{Lemma 4}% Lemma

\clearpage

%\onecolumn
\begin{center}
	{\fontsize{18pt}{0pt}\selectfont \bf Appendix}
\end{center}

\section{PROOF OF LEMMA~\ref{lem:marginal}}\label{a_sec:lemma1}
We here prove the following lemma: 
%	We prove that the marginal gain in each iteration of \Cref{algorithm} 
%	can be bounded without the monotonicity:	
%	the following lemma, which lower bounds the marginal gain 
%	of \Cref{algorithm}: 		
%	\begin{lem_1}\label{a_lem:marginal} 
%		For $i=1,\dots,\k$, 
%		we have
%		\begin{align}
%		\E[\f(\As_{i}) - \f(\As_{i-1})]
%		\ge{} 
%		\frac{1-\epsilon}{\k}
%		\E[\fdel{\Aso}{\As_{i-1}}].
%		\end{align}
%	\end{lem_1}

\marginal*

\begin{proof}
	Assume that %$\As_{i-1}$ is given and 
	all random quantities are conditioned on $\As_{i-1}$.  
	Since $\Rs$ consists of $\ceil{\si}$ elements 
	sampled uniformly at random from $V\bs\As_{i-1}$, 
	we have
	\begin{align}
	\Pr[\Rs\cap\{\Aso\bs\As_{i-1}\} = \emptyset]
	&\le
	\left(1 - \frac{|\Aso\bs\As_{i-1}|}{|V\bs \As_{i-1}|} \right)^{\ceil{\si}}
	\\&\le
	\left(1 - \frac{|\Aso\bs\As_{i-1}|}{|V\bs \As_{i-1}|} \right)^{\si}
	\\&\le
	\exp\left( -\si \frac{|\Aso\bs\As_{i-1}|}{|V\bs \As_{i-1}|}  \right)
	\\&\le
	\exp\left( -\si \frac{|\Aso\bs\As_{i-1}|}{\n}  \right)
	\end{align}
	From the concavity of 
	$1-\exp\left(-\si\frac{x}{\n}\right)$ 
	as a function of $x={|\Aso\bs\As_{i-1}|}\in[0,\k]$, 
	we obtain
	\begin{align}
	\begin{aligned}
	\Pr[\Rs\cap\{\Aso\bs\As_{i-1}\} \neq \emptyset]
	\ge{}
	\left(1 - \exp\left( -\si \frac{\k}{\n}  \right)\right) \frac{|\Aso\bs\As_{i-1}|}{\k}
	\ge{}
	(1 - \epsilon)\frac{|\Aso\bs\As_{i-1}|}{\k}
	\end{aligned}
	\label{a_eq:pr}
	\end{align}
	We now consider bounding $\f(\As_{i}) - \f(\As_{i-1})$ from below. 
	Since $\a_i$ is chosen by the greedy rule from $\Rs$ and 
	$\f(\As_{i}) - \f(\As_{i-1})$ is non-negative, 
	if $\Rs\cap\{\Aso\bs\As_{i-1}\}$ is nonempty, 
	$\f(\As_{i}) - \f(\As_{i-1})$ 
	is at least as large as 
	$\clip{\fdel{\a}{\As_{i-1}}}$ in expectation, where $\a\in V$ is chosen uniformly at random from $\Rs\cap\{\Aso\bs\As_{i-1}\}$ and 
	$\clip{x}\coloneqq\max\{x,0\}$ for any $x\in\R$. 
	Furthermore, since $\Rs$ contains each element of $\Aso\bs\As_{i-1}$ 
	equally likely, 
	we can take $\a\in V$ to be sampled uniformly at random from 
	$\Aso\bs\As_{i-1}$. As a result, we obtain 
	\begin{align}
	\E[\f(\As_{i}) - \f(\As_{i-1})]
	\ge{}&
	\Pr[\Rs\cap\{\Aso\bs\As_{i-1}\} \neq \emptyset]
	\times 
	\frac{\sum_{\a\in\Aso\bs\As_{i-1}} \clip{\fdel{\a}{\As_{i-1}}}}{|\Aso\bs\As_{i-1}|}
	\\
	\ge{}&
	\frac{1-\epsilon}{\k} \sum_{\a\in\Aso\bs\As_{i-1}}\clip{\fdel{\a}{\As_{i-1}}}
	\\
	\ge{}&
	\frac{1-\epsilon}{\k} \sum_{\a\in\Aso\bs\As_{i-1}}{\fdel{\a}{\As_{i-1}}}
	\\
	\ge{}&
	\frac{1-\epsilon}{\k}\fdel{\Aso}{\As_{i-1}}, 
	\end{align}
	where 
	the second inequality comes from \eqref{a_eq:pr} 
	and 
	the last inequality comes from the submodularity. 
	By taking expectation over all possible realizations of $\As_{i-1}$, we obtain the lemma.  
\end{proof}
Note that, the use of clipping $\clip{\cdot}$ is 
crucial when $\f$ is non-monotone. 
Without it, 
the values in the second and third lines 
can be negative due to the lack of the monotonicity; 
in this case, the inequality does not hold. 
However, thanks to the 
non-negativity of $\f(\As_{i}) - \f(\As_{i-1})$, 
we can clip the marginal gain, 
which enables us to prove the lemma even if $\f$ is non-monotone.

\section{PROOF OF LEMMA~\ref{lem:eps}} \label{a_sec:lemma4}
We here prove the following lemma: 
\eps*

To prove this, we use the following two lemmas. 

\begin{restatable}{lem}{lin}\label{lem:lin}
	%\begin{lem}\label{lem:lin}
	If $0\le y \le x \le 1$, we have
	\[
	(x-y)^m \ge x^m - my
	\]
	for any integer $m\ge1$. 
	%\end{lem}
\end{restatable}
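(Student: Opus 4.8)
The plan is to prove the inequality by induction on $m$, relying only on the elementary bounds $x - y \ge 0$ and $0 \le x \le 1$. The base case $m = 1$ is immediate, since $(x-y)^1 = x - y = x^1 - 1\cdot y$ holds with equality. This also confirms that the inequality is tight, which is reassuring given that it is later used to extract the $\epsilon - 2\cdot\frac{\k-1}{\n-\k}$ bound in \Cref{lem:eps}.

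For the inductive step, suppose $(x-y)^m \ge x^m - my$ for some $m \ge 1$. Since $x - y \ge 0$, multiplying both sides by the nonnegative factor $x - y$ preserves the direction of the inequality, yielding
\[
(x-y)^{m+1} \ge (x-y)(x^m - my) = x^{m+1} - yx^m - myx + my^2.
\]
It then suffices to verify that this right-hand side is at least $x^{m+1} - (m+1)y$, i.e.\ that $yx^m + myx - my^2 \le (m+1)y$. The case $y = 0$ is trivial; otherwise dividing by $y > 0$ reduces the claim to $x^m + mx - my \le m+1$, which follows from $-my \le 0$ together with $x^m \le 1$ and $mx \le m$, both consequences of $0 \le x \le 1$. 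This closes the induction.

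A slicker non-inductive alternative is to use the factorization $x^m - (x-y)^m = y\sum_{j=0}^{m-1} x^{m-1-j}(x-y)^j$. Under the hypotheses $0 \le x - y \le x \le 1$, every summand obeys $0 \le x^{m-1-j}(x-y)^j \le x^{m-1} \le 1$, so the sum is at most $m$ and hence $x^m - (x-y)^m \le my$, which is exactly the desired bound. Both arguments are routine and I anticipate no real obstacle; the only point requiring care is that the multiplication in the inductive step is performed by a quantity \emph{known to be nonnegative}, so the inequality survives even when $x^m - my$ happens to be negative.
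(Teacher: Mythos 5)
Your inductive argument is correct and is essentially the same as the paper's: both multiply the inductive hypothesis by the nonnegative factor $x-y$, expand, and absorb the cross terms using $x^m\le 1$ and $mx\le m$. The non-inductive alternative via the factorization $x^m-(x-y)^m = y\sum_{j=0}^{m-1}x^{m-1-j}(x-y)^j$ is also valid and arguably cleaner, but the substance matches the paper's proof.
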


\begin{proof}
	We prove the claim by induction. 
	If $m=1$, the inequality holds trivially. 
	Assume that it holds for every $m^\prime=1,\dots,m-1$. 
	Then, we have
	\begin{align}
	(x-y)^m 
	&\ge (x-y)(x^{m-1} - (m-1)y)
	\\
	&\ge
	x^m - (m-1)xy - x^{m-1}y
	\\
	&\ge
	x^m - my,
	\end{align}
	where the last inequality comes from $x\le1$. 
\end{proof}

\begin{lem}\label{lem:gamma}
	If $0\le\gamma\le1$, we have
	\[
	\left( 1 - \frac{\gamma}{x}  \right)^{x-1} \ge \e^{-\gamma} 
	\]
	for any $x\ge1$. 
\end{lem}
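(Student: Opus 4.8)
The plan is to take logarithms and reduce the claim to a one-variable comparison. First dispose of the boundary cases: if $\gamma=0$ both sides equal $1$, and if $x=1$ the left-hand side is $1\ge\e^{-\gamma}$. So assume $\gamma\in(0,1]$ and $x>1$; since then $0<\gamma/x<1$ we have $1-\gamma/x\in(0,1)$, and the claim is equivalent, after taking logs, to
\[
(x-1)\ln\!\left(1-\frac{\gamma}{x}\right)\ge-\gamma.
\]

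The key ingredient is the elementary bound $\ln(1-t)\ge\dfrac{-t}{1-t}$ valid for $t\in[0,1)$. I would prove this by observing that both sides vanish at $t=0$ and comparing derivatives: the left side has derivative $-1/(1-t)$ while the right side has derivative $-1/(1-t)^2$, and since $(1-t)^2\le1-t$ on $[0,1)$ the right side decreases at least as fast, so the left side stays above it. Applying this with $t=\gamma/x$ gives $\ln(1-\gamma/x)\ge-\gamma/(x-\gamma)$, and multiplying by the positive factor $x-1$ yields
\[
(x-1)\ln\!\left(1-\frac{\gamma}{x}\right)\ge\frac{-\gamma(x-1)}{x-\gamma}.
\]

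It then remains to check $-\gamma(x-1)/(x-\gamma)\ge-\gamma$. Since $x-\gamma>0$ and $\gamma>0$, this is equivalent to $x-1\le x-\gamma$, i.e.\ to $\gamma\le1$, which is exactly the hypothesis. Chaining the two displays gives the desired logarithmic inequality, and exponentiating recovers the lemma.

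The step I expect to require the most care is not the logarithmic bound itself but the bookkeeping of inequality directions: one multiplies by $x-1>0$ (preserving direction) but the final reduction effectively divides through by $-\gamma<0$ (flipping direction), so one must verify that the chain lands on $\gamma\le1$ rather than its reverse. Everything else is routine, and since no integrality of $x$ is used, the argument covers all real $x\ge1$.
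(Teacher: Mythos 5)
Your proof is correct, but it takes a genuinely different route from the paper's. The paper defines $g(x)=\left(1-\frac{\gamma}{x}\right)^{x-1}$, shows via the logarithmic derivative and the upper bound $\ln(1-u)\le -u$ that $g$ is nonincreasing in $x$, and then concludes from $\lim_{x\to+\infty}g(x)=\e^{-\gamma}$. You instead prove the inequality pointwise by taking logarithms and invoking the complementary lower bound $\ln(1-t)\ge \frac{-t}{1-t}$, which reduces everything to the one-line check $\frac{x-1}{x-\gamma}\le 1$, i.e.\ $\gamma\le 1$ --- and this is precisely where the hypothesis on $\gamma$ enters, just as it enters the paper's argument through the factor $1-\gamma$ in the sign computation of $g'(x)$. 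Your version is slightly more self-contained, since it avoids both the differentiation of $g$ and the evaluation of the limit at infinity; the paper's version yields the marginally stronger fact that $g$ is monotone decreasing in $x$, though that extra information is never used. Your bookkeeping of the inequality directions (multiplying by $x-1>0$, then cancelling $\gamma>0$) is handled correctly, and your observation that no integrality of $x$ is needed matches how the lemma is applied in the paper (with $x=\k$, an integer, but the real-variable statement costs nothing).
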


\begin{proof}
	Let $g(x)=\left(1 - \frac{\gamma}{x}\right)^{x-1}$. 
	By considering logarithmic differential, we obtain 
	\begin{align}
	\frac{\rm d}{{\rm d}x}g(x)
	&= 
	g(x)
	\left( 
	\ln\left( 1 - \frac{\gamma}{x}   \right)  
	+
	\frac{\gamma}{x} \cdot
	\frac{x-1}{x-\gamma}
	\right)
	\\
	&\le 
	g(x)
	\left( 
	\left( 1 - \frac{\gamma}{x}   \right)  - 1
	+
	\frac{\gamma}{x} \cdot
	\frac{x-1}{x-\gamma}
	\right)
	\\
	&= 
	-g(x)\frac{\gamma}{x}
	\cdot
	\frac{1 - \gamma}{x - \gamma}
	\\
	&\le 0. 
	\end{align}
	Hence $g(x)$ decreases as $x$ becomes larger. 
	Since $\lim_{x\to+\infty}g(x)=\e^{-\gamma}$, we obtain the claim. 
\end{proof} 
We are now ready to prove \Cref{lem:eps}. 
\begin{proof}[Proof of \Cref{lem:eps}]
	Note that we have 
	\[
	0 \le \frac{2}{\n-\k} \le 1 - \frac{1}{\k}\ln\frac{1}{\epsilon} \le 1
	\]
	thanks to $\k\ge2$, $\n\ge3\k$, and $\epsilon\ge1/\e$ (see, \Cref{assump:hek}). 
	Therefore, by using \Cref{lem:lin}, we obtain 
	\begin{align}
	\left( 1-\frac{1}{\k}\ln\frac{1}{\epsilon} - \frac{2}{\n-\k}\right)^{\k-1}
	\ge{}
	\left( 1-\frac{1}{\k}\ln\frac{1}{\epsilon} \right)^{\k-1}
	- 
	2\cdot\frac{\k-1}{\n-\k}.
	\end{align}
	Furthermore, since $\ln\frac{1}{\epsilon}\le1$ due to $\epsilon\ge1/\e$, 
	\Cref{lem:gamma} implies 
	\[
	\left( 1-\frac{1}{\k}\ln\frac{1}{\epsilon} \right)^{\k-1}
	\ge
	\e^{-\ln\frac{1}{\epsilon}}=\epsilon.
	\]
	Hence we obtain the claim. 
\end{proof}

\end{document}